\newtheorem{thm}{Theorem}
\newtheorem{prop}{Proposition}
\newtheorem{coro}{Corollary}
\newtheorem{lemma}{Lemma}
\newtheorem{defn}{Definition}
\newtheorem{remark}{Remark}
\newtheorem{exmp}{Example}
\def\<{\langle}
\def\>{\rangle}
\newcommand\be{\begin{equation}} 
\newcommand\ee{\end{equation}}
\newcommand{\comment}[1]
\def\bea{\begin{eqnarray*} }
\def\eea{\end{eqnarray*} }
\begin{document}

\title{Reduced thin-sandwich equations on manifolds euclidean at infinity and on closed manifolds: existence and multiplicity}

\author{R. Avalos\thanks{Partially supported by PNPD/CAPES.}  \,\,and J. H. Lira\thanks
{Partially supported by CNPq and FUNCAP.}
}
\date{
\textit{Department of Mathematics, Federal University of Cear\'a,\\
Fortaleza, Brazil}
}
\maketitle

\begin{abstract} 
The reduced thin-sandwich equations (RTSE) appear within Wheeler's thin-sandwich approach towards the Einstein constraint equations (ECE) of general relativity. It is known that these equations cannot be well-posed in general, but, on closed manifolds, sufficient conditions for well-posedness have been established. In particular, it has been shown that the RTSE are well posed in a neighbourhood of umbilical solutions of the constraint equations without conformal Killing fields. In this paper we will analyse such set of equations on manifolds euclidean at infinity in a neighbourhood of asymptotically euclidean (AE) solutions of the ECE. The main conclusion in this direction is that on AE-manifolds admitting a Yamabe positive metric, the solutions of the RTSE parametrize an open subset in the space of solutions of the ECE. Also, we show that in the case of closed manifolds, these equations are well-posed around umbilical solutions of the ECE admitting Killing fields and present some relevant examples. Finally, it will be shown that in the set of umbilical solutions of the vacuum ECE on closed manifolds, the RTSE are generically well-posed. 
\end{abstract}

%






\setcounter{MaxMatrixCols}{10}



\section{Introduction}

The study of the Einstein field equations has been an active object of research in both mathematics and physics during the last century. In this line, plenty of the mathematical interest has been directed towards showing that these equations can be cast as a hyperbolic system, and thus, that there is a well-posed Cauchy problem associated with such a system. It has been shown that, at least for the most common sources, this is actually true as long as the initial data for this system satisfies some constraint equations. This last point has drawn a great deal of attention to the study of these constraint equations.  Since the Einstein equations are a system of second order equations, which can be cast as system of non-linear wave equations, the initial data for such system involves both the initial data for the 4-dimensional Lorentzian metric, which is given in terms of a Riemannian metric $g$ on a given 3-manifold $M$, and its initial time-derivative, which is given in terms of a symmetric second rank tensor field $\dot{g}$. In the case that other physical fields are involved in the field equations, the initial data set should take into account the necessary initial data for these field too. 

In order to simplify the discussion, for the moment we will restrict to vacuum, and thus neglect other possible fields besides the metric tensor. Typically, the constraint equations are not posed for $(g,\dot{g})$, but for $(g,K)$, where $K$ is a symmetric second-rank tensor field, which, after getting an embedding into space-time $(V,\bar{g})$, represents the extrinsic curvature of the Riemannian submanifold $(M,g)$. The relation between $K$ and $\dot{g}$ is very well-known, and it will be useful for us to write it down: 
\begin{align}\label{extcurv}
K=-\frac{1}{2N}(\dot{g}-\pounds_{\beta}g).
\end{align}
In the above expression, $N$ and $\beta$ represent respectively the usual lapse function and shift vector on the initial manifold $M$. Since the constraint system is best written in terms of $(g,K)$, and there is a very clear relation between $K$ and $\dot{g}$, solving the constraints for $(g,K)$ seems quite natural. Furthermore, from the perspective of the evolution problem, it is well known that $g$ and $K$ serve as the actual geometric initial data. This is because two initial data sets $(g,\dot{g}_1)$ and $(g,\dot{g}_2)$ which have the same data $(g,K)$ and solve the constraint equations, evolve into equivalent (diffeomorphic) space-times. This last remark shows that $(N,\beta)$ work actually as \textit{gauge} variables, with no further dynamical significance. This fact has naturally presented the idea of trying to solve the constraint equations for such gauge variables, and leave $g$ and $\dot{g}$ \textit{free}. This strategy would seem quite natural, and even desirable if one wanted to be able to specify arbitrarily the dynamical degrees of freedom. Furthermore, the constraint equations posed for $g$ and $K$ are a highly underdetermined  system, which presents the problem of \textit{choosing} which part of the initial data should be considered as free and which part should be determined by solving the constraints. Clearly, it would be desirable that both sets of initial data have a clear physical and geometric interpretation. The usual approach to this problem is to apply the so called \textit{conformal method} to rewrite the constraints as an elliptic system. In this approach, not $g$ neither $\dot{g}$ are left as freely specified initial data. This leaves open the question of whether it is actually possible to solve the constraints for the lapse and shift, and leave the dynamical variables $(g,\dot{g})$ as free data. In fact, this was put forward as a conjecture by John Wheeler, which coined the name \textit{thin sandwich conjecture} (TSC). 

In essence, the TSC states that, for freely given $(g,\dot{g})$ it is possible to solve the constraint equations for the lapse and shift. It has been shown that such program cannot work in general, since there are simple counterexamples \cite{B-O}. Furthermore, some global uniqueness results have also been shown \cite{B-O}-\cite{giulini1}, and sufficient conditions for the TSC to hold were first established in \cite{bartnik} for phenomenological sources (or vacuum) and extended in \cite{giulini1}, where some matter fields are included in a more fundamental level. All of these results were concerned almost exclusively with the 3-dimensional case, which is of interest for conventional general relativity. In \cite{ADRL}, the authors prove that the results presented in \cite{bartnik} actually hold in any dimension greater or equal to 3, and furthermore, that on any compact $n$-dimensional manifold there is an open subset in the space of solutions of the constraint equations where the TSC holds. This last point was a novelty also for the 3-dimensional case.  

Nonetheless, all of the above existence results are limited to the compact scenario. This is why we find it interesting to move to the non-compact setting, and try to see whether some of these results extend naturally or not. We will be interested in manifolds which have a finite number of ends, which are diffeomorphic to the complement of a ball in euclidean space. In this case it could be expected that, at least, a theorem of the type found in \cite{bartnik} should hold, since the main analytic techniques needed for such theorem are also valid (modulo technical details) in this non-compact scenario. Nevertheless, we are more interested in a result analogous to the one presented in \cite{ADRL}, which guarantees the existence of an open set in the space of solutions of the constraints where we can solve the equations in terms of the lapse and shift. The techniques used in \cite{ADRL} rely on a theorem proven by Kazdan and Warner \cite{KW}, which is no longer available in the non-compact setting. In order to prove an theorem analogous to the one presented in \cite{ADRL}, we will use some specific results valid for asymptotically euclidean manifolds \cite{CB1},\cite{maxwell-dilts},\cite{maxwell}. The main result coming from this analysis is the following:

\begin{thm}\label{thmTS2}
On any $n$-dimensional manifold euclidean at infinity, $n\geq 4$, which admits a Yamabe positive metric, the solutions of the reduced thin-sandwich equations parametrise an open subset in the space of solutions of the Einstein constraint equation. More explicitly, there is a 1-parameter family of umbilical solutions of the ECE which admits a neighbourhood that can be parametrised via the thin-sandwich problem. 
\end{thm}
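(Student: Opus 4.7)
The plan is to follow the overall strategy of \cite{ADRL} but to replace the Kazdan--Warner deformation step, which is unavailable in the non-compact setting, with asymptotically euclidean (AE) conformal-method results from \cite{CB1,maxwell-dilts,maxwell}. First I would exhibit a suitable 1-parameter family of umbilical AE solutions of the vacuum ECE to serve as seed data. On an AE manifold, the momentum constraint applied to a $K$ of the form $\frac{\mathrm{tr}_gK}{n}g$, combined with the required asymptotic decay, forces $K\equiv 0$, so each seed reduces to a time-symmetric datum $(g,0)$ with scalar-flat $g$. Yamabe positivity guarantees, via the AE Lichnerowicz theory, that such scalar-flat metrics exist and can be deformed in a 1-parameter family inside the AE conformal class (for instance by rescaling along a chosen direction in the free data).

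The second step is to set up the RTSE as an operator equation. Substituting \eqref{extcurv} into the Hamiltonian and momentum constraints yields a nonlinear map $\mathcal{F}(N,\beta;g,\dot{g})=0$, in which $(g,\dot{g})$ are regarded as parameters in appropriate weighted Sobolev spaces $W^{s,p}_\delta$ and $(N,\beta)$ are the unknowns. I would then compute $D_{(N,\beta)}\mathcal{F}$ at each umbilical seed. Because $K=0$ along the seed, the linearisation largely decouples: the shift equation is governed by the conformal Killing Laplacian $L$ of $g$ plus lower order terms, while the lapse equation reduces to a conformal-Laplacian-type scalar operator acting on $N$ with a nonnegative zeroth-order contribution.

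The third and most delicate step is to show that $D_{(N,\beta)}\mathcal{F}$ is an isomorphism between the chosen weighted spaces. For the shift one invokes the AE isomorphism theorems for $L$ from \cite{maxwell}, which give invertibility on $W^{s,p}_\delta$ for an appropriate decay rate $\delta$ precisely because an AE manifold supports no nontrivial decaying conformal Killing vectors. For the lapse one uses the isomorphism theory for the conformal Laplacian on AE manifolds with Yamabe positive metric, as in \cite{CB1,maxwell-dilts}. The main obstacle is ensuring that the decay rate $\delta$ and the Sobolev indices $(s,p)$ can be chosen simultaneously compatible with both operators, with the nonlinear structure of $\mathcal{F}$ (so that $\mathcal{F}$ maps into the intended target), and with the asymptotic behaviour of $(g,\dot{g})$; the dimension restriction $n\geq 4$ enters precisely through the admissible ranges of these parameters. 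This is the point at which Yamabe positivity plays its decisive role: it eliminates the kernel that would otherwise obstruct invertibility of the lapse equation.

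Once the linearisation is invertible, the implicit function theorem in Banach spaces provides, for every $(g,\dot{g})$ in a neighbourhood of a seed, a unique solution $(N,\beta)$ of the RTSE; by construction, the pair $(g,K)$ recovered from \eqref{extcurv} then solves the ECE. A standard open-mapping argument identifies this set with an open neighbourhood of the seed inside the space of solutions of the constraint equations, and running the argument along the 1-parameter family of umbilical seeds yields the statement of the theorem.
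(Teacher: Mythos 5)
There is a fatal gap at the very first step. You argue that for umbilical seeds the momentum constraint plus decay ``forces $K\equiv 0$,'' so that the seeds are time-symmetric data $(g,0)$. But at time-symmetric data the reduced thin-sandwich operator is not even defined: with $K=0$ one has $\gamma=-NK=0$ and, in vacuum, $2\epsilon_\Lambda-R_g=(\mathrm{tr}_gK)^2-|K|^2_g=0$, so the factor $\sqrt{(2\epsilon_\Lambda-R_g)/((\mathrm{tr}_g\gamma)^2-|\gamma|^2_g)}$ in (\ref{div-S}) is a $0/0$ expression. This is exactly the known degeneracy of the thin-sandwich problem (the condition $(\mathrm{tr}_g\gamma)^2-|\gamma|^2_g>0$ must hold at the reference solution). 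The paper's essential idea, which your proposal is missing, is to introduce a \emph{positive cosmological constant} and take $K_0=\alpha g_0$ with $\alpha\neq 0$ constant and $\alpha^2=2\Lambda/(n(n-1))$: the momentum constraint is then satisfied identically \emph{without} requiring $K$ to decay (only the deviation $\gamma+\alpha e$ decays), the Hamiltonian constraint reduces to $R_{g_0}=0$, and $(\mathrm{tr}_{g_0}\gamma_0)^2-|\gamma_0|^2_{g_0}=N_0^2\alpha^2 n(n-1)>0$ keeps the operator non-degenerate. The 1-parameter family in the statement is precisely the family indexed by $\alpha$ (equivalently $\Lambda>0$); your reduction eliminates it.

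Two further points. First, the RTSE is a single system for the shift $\beta$ alone: the lapse is determined \emph{algebraically} by (\ref{lapse}), so there is no ``lapse equation'' and no conformal-Laplacian-type operator acting on $N$ in the linearisation. The linearised operator at the umbilical seed is $Y\mapsto -\tfrac{1}{2}\mathrm{div}_{g_0}\bigl(N_0^{-1}\pounds_{g_0,conf}Y\bigr)$, and its invertibility on the weighted spaces $H_{s+2,\delta}\to H_{s,\delta+2}$, $-\tfrac{n}{2}<\delta<\tfrac{n}{2}-2$, follows from the absence of decaying conformal Killing fields on AE manifolds together with formal self-adjointness --- it has nothing to do with Yamabe positivity. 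Second, Yamabe positivity enters only to guarantee the \emph{existence} of the seed, i.e.\ of a scalar-flat AE metric in the conformal class (via the prescribed-scalar-curvature results of \cite{maxwell-dilts}); this hypothesis cannot be dropped, since e.g.\ punctured tori admit no scalar-flat AE metric. The restriction $n\geq 4$ comes from the compatibility of the weight ranges ($-\tfrac{n}{2}<\delta<\tfrac{n}{2}-3$ must be non-empty), which you gesture at but do not pin down.
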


We will then go back to the compact scenario, and investigate the possibility of solving the constraints in terms of the lapse-shift at some of those points where one of the conditions needed in the theorems presented in \cite{bartnik}-\cite{ADRL} fail. In this direction, we will show that using elliptic theory and a refinement of the functional spaces where the problem is posed, the main theorem presented in \cite{ADRL} can be extended and applied around solutions with Killing fields. Then, we will make use of the bifurcation analysis presented in \cite{crandal} to show that at some of these \textit{degenerate} initial data sets, we actually have bifurcation of solutions. Finally, we will show that the following generic result holds:

\begin{thm}
On any compact $n$-dimensional manifold, $n\geq 3$, the set of smooth \textit{umbilical} solutions of the vacuum constraint equations where the thin-sandwich problem is well posed is dense in the set of smooth umbilical solutions of the vacuum constraint equations.
\end{thm}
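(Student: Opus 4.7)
First, I would pin down the parameter space. A smooth umbilical vacuum solution on a closed $n$-manifold $M$ has the form $(g,K)$ with $K=\tfrac{\tau}{n}g$; the momentum constraint forces $\tau$ to be constant, and the Hamiltonian constraint then reduces to the algebraic identity $R_g=-\tfrac{n-1}{n}\tau^2$ (in the sign convention of the paper), so $g$ has constant scalar curvature of a fixed sign. The set of smooth umbilical vacuum solutions is thus parametrised by smooth CSC metrics on $M$ together with the constant $\tau$, equipped with the $C^\infty$ Fréchet topology. By the preceding theorems in the paper, the thin-sandwich problem is well-posed at $(g,K)$ whenever $g$ admits no nontrivial conformal Killing field, and more generally when its Killing algebra meets the refined non-degeneracy condition established just above. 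It therefore suffices to show that, within the moduli of CSC metrics of the relevant sign, the subset whose conformal Killing algebra is trivial is $C^\infty$-dense.

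Second, I would exploit the local structure of the CSC moduli space around a fixed reference $(g_0,K_0)$. If $g_0$ is Yamabe non-degenerate, meaning that the linearisation of the CSC equation modulo diffeomorphism and conformal rescaling is surjective, then the implicit function theorem produces a smooth manifold of CSC metrics through $g_0$ with an explicit tangent space of symmetric two-tensors. Within this tangent space, for every non-zero conformal Killing field $X$ of $g_0$ the locus of tensors $h$ fixed under the $X$-flow is cut out by the linear condition $\mathcal{L}_X h = \mu(h)\,g_0$ (with the natural conformal correction) and is therefore a proper closed subspace. Since the conformal isometry group of $g_0$ is a finite-dimensional Lie group, an Ebin--Bourguignon-style Baire category argument applied to this finite family of proper subspaces produces a $C^\infty$-dense set of nearby CSC metrics whose conformal Killing algebra is trivial, completing the density step in the regular case.

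Third, at a degenerate umbilical solution --- precisely the setting in which the Crandall--Rabinowitz bifurcation analysis invoked in \cite{crandal} earlier in the paper applies --- the implicit function theorem is unavailable, but the bifurcating branches of CSC solutions produced by that analysis populate arbitrarily small $C^\infty$-neighbourhoods of $(g_0,K_0)$ with new umbilical solutions. These branches intrinsically break the symmetries responsible for the non-trivial cokernel, so a generic point on a bifurcating branch already has strictly smaller conformal Killing algebra than $g_0$; applying the regular-case argument of the previous paragraph along the branch then yields umbilical solutions arbitrarily close to $(g_0,K_0)$ satisfying the non-degeneracy condition. The principal technical obstacle lies in the genericity step: one must verify that the symmetry-breaking infinitesimal perturbations actually remain tangent to the \emph{non-linear} CSC constraint locus (not just to its linearisation), which I expect to handle via the York-type decomposition of symmetric tensors already deployed elsewhere in the paper, combined with elliptic regularity to pass from a single Sobolev class back to the $C^\infty$ topology.
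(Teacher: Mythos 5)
There is a genuine gap in your second step, and your third step addresses a difficulty that does not actually arise. The paper's proof is built on two ingredients that your outline is missing: (i) the Liimatainen--Salo result (cited as \cite{genericmetrics}) that metrics with \emph{no} conformal Killing fields are dense in the space of \emph{all} smooth metrics, and (ii) the observation that the conformal Killing algebra is a conformal invariant, so one may perturb $g_0$ freely to a nearby symmetry-free $g$ and then project back onto the constant-scalar-curvature locus by solving $\Delta_g u - c_n(R_g u + u^{\frac{n+2}{n-2}})=0$ conformally. Because $R_{g_0}=2\Lambda-\frac{\tau^2}{n}(n-1)<0$, the linearisation of this Yamabe-type equation at $(g_0,1)$ is $\Delta_{g_0}-c_n(N-1)$ with $c_n(N-1)>0$, which is \emph{always} an isomorphism on a closed manifold; the implicit function theorem then yields a positive smooth $u$ close to $1$, and $\bar g=u^{4/(n-2)}g$ is a CSC metric $C^k$-close to $g_0$ with trivial conformal Killing algebra. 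Your attempt to prove density of symmetry-free metrics \emph{inside} the CSC moduli space by avoiding, for each conformal Killing field $X$ of $g_0$, the subspace of tensors with $\pounds_X h=\mu(h)g_0$ does not work as stated: that condition only rules out $X$ itself surviving the perturbation, whereas the perturbed metric could acquire entirely new conformal Killing fields unrelated to those of $g_0$. A Baire argument over the finitely many subspaces you name therefore does not characterise the bad set, and making such a slice-wise transversality argument rigorous is essentially as hard as re-proving the density theorem you should instead be quoting.

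Your third paragraph is also off-track. In the paper's construction the only implicit function theorem used in the density proof is the one for the Yamabe-type equation above, and it never degenerates under the standing sign hypothesis, so there is no ``degenerate case'' requiring bifurcation. The Crandall--Rabinowitz analysis in the paper concerns bifurcation of the reduced thin-sandwich operator in the shift variable $\beta$ at reference solutions with conformal Killing fields; it plays no role in the genericity theorem, and invoking it here conflates two separate parts of the paper. Finally, note that once $\bar g$ has no conformal Killing fields, well-posedness of the RTSE at the induced data follows from the isomorphism property of $D_2\Phi$ established earlier; you do not need the refined Killing-field lemma for this step.
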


In fact, Theorem \ref{generic-umbilical} shows much more than the previous statement, since it is shown that any smooth umbilical solution can be approximated as much as we want, in any $C^k$-topology, by another umbilical solution, with the same mean and scalar curvatures as the original one, for which the thin-sandwich problem is well-posed.

\section{Lapse-shift formulation of the constraint equations}

The constraint equations of general relativity on an $n$-dimensional manifold $M$ read as follows:
\begin{align}
R_g+(\text{tr}_{g}K)^2-|K|^2_g - 2 \Lambda&=2\epsilon\label{hamiltonian},\\
\nabla\text{tr}_gK - \text{div}_g K &= S \label{momentum},
\end{align}
where $g$ is a Riemannian metric on $M$, $R_g$ is its scalar curvature, $\epsilon$ represents the induced energy density on $M$, $S\doteq NT_{0i}$ represents minus the induced momentum density on $M$ and $\Lambda$ the cosmological constant. The tensor $K$ is a symmetric second rank tensor field which is related to $\dot{g}$ as in (\ref{extcurv}), by means of the lapse function and shift vector.

From now on we will only take into consideration globally hyperbolic space-times of the form $(\bar M=M\times\mathbb{R},\bar{g})$. It will be convenient to consider the adapted frame 
\begin{align*}
e_{i}  &  = \partial_{i}\; ,\; \; i=1,\cdots,n,\\
e_{0}  &  = \partial_{t} - \beta
\end{align*}
and its dual coframe
\begin{align*}
\theta^{i}  &  = dx^{i} +\beta^{i}dt \; , \; \; i=1,\cdots,n,\\
\theta^{0}  &  = dt,
\end{align*}
where $\{x^i\}_{i=1}^{n}$ are local coordinates on $M$ and $t$ is a coordinate on $\mathbb{R}$. Denoting respectively by $N$ and $\beta$ the lapse function and the shift vector field, the ambient metric $\bar g$ is written in terms of the coframe $\{\theta^a\}_{a=0}^n$ in the following way
\begin{align*}
\bar g = -N^{2}\theta^0\otimes\theta^0+ g_{ij}\theta^{i}\otimes\theta^{j}%
\end{align*}
As in \cite{bartnik} and \cite{ADRL}, we consider $\epsilon$ and $S$ as phenomenologically given (or zero). For any solution of the constraints satisfying $2\epsilon_{\Lambda}-R_g>0$, where $\epsilon_{\Lambda}=\epsilon+\Lambda$, it follows from (\ref{hamiltonian}) that
\begin{equation}
\label{lapse}N=\sqrt{\frac{(\mathrm{tr}_{g} \gamma)^{2}-|\gamma|^{2}_{g}%
}{2\epsilon_{\Lambda}-R_{g}}}%
\end{equation}
where the tensor $\gamma$ has components
\begin{equation}
\gamma_{ij}=\frac{1}{2}\big(\dot g_{ij}-(\nabla_{i}\beta_{j}+\nabla_{j}%
\beta_{i})\big).
\end{equation}
The expression (\ref{lapse}) defines the lapse function $N$ in terms of $g,\dot{g}$ and $\beta$. Using this information, we see that (\ref{momentum}) is equivalent to
\begin{equation}\label{div-S}
\Phi(\psi,\beta)\doteq \mathrm{div}_g \Bigg(\sqrt{\frac{2\epsilon_{\Lambda}-R_{g}}{(\mathrm{tr}_{g}
\gamma)^{2}-|\gamma|^{2}_{g}}}\,\big(\gamma-\mathrm{tr}_{g} \gamma\,
g\big)\Bigg) - S = 0.
\end{equation}
where
\begin{equation}
\psi \doteq (g,\dot{g},\Lambda,\epsilon,S).
\end{equation}
The strategy adopted in both \cite{bartnik} and \cite{ADRL} is to reverse the above procedure. That is, to consider (\ref{div-S}) as an equation for the shift vector field $\beta$, when $\psi$ is a freely chosen initial data. If we can show that (\ref{div-S}) is well posed in suitable functional spaces, then using (\ref{lapse}) as a definition for the lapse function, we get a solution for the constraint equations for the prescribed data $\psi$. It is clear that (\ref{div-S}) is not well-defined for arbitrarily chosen $\psi$. Thus, the strategy is to analyse such system in a neighbourhood of a solution of the constraint equations inducing data $(\psi_0,\beta_0)$ such that $\Phi(\psi_0,\beta_0)$ is well-defined. The natural way to address this problem is to analyse the linearised map $D_2\Phi_{(\phi_0,\beta_0)}$ and try to apply an implicit function argument.  

Before delving into the technical core of the paper, it will be worth noticing that there is a simple way to incorporate non-phenomenological sources into the picture described above. For instance, in the case of sources generated by a scalar field and an electromagnetic field, which are the most common classical fields, we set
\begin{align*}
\epsilon_{\Lambda}&=\Lambda+\frac{1}{2}(\pi^2+|\nabla\phi|^2_{g})+V(\phi)+\frac{1}{2}(|E|^2_g+|B|^2_g),\\
S&=\pi\nabla\phi+F(\cdot,E),
\end{align*}
where $\phi$ represents a real scalar field with a self-interacting potential $V$ and $\pi\doteq \frac{1}{N}e_0(\phi)|_{t=0}$. The two-form  $F$ on $M$ is induced from the 2-form $\bar{F}$ in $\bar M$ which describes the full electromagnetic field in space-time. Hence,  $E$ and $B$ represent, respectively, the prescribed electric and magnetic fields in $M$, defined by $E^i\doteq \frac{1}{N}\bar{F}_0{}^{i}|_{t=0}$ and $|B|^2\doteq \frac{1}{2}F^{ij}F_{ij}|_{t=0}$. At this point we should decide what could actually be a sensible choice for the specified initial data. For instance, from the physical point of view, it is reasonable to consider $E$ and $B$  as given, describing some configuration of electric charges and currents. In this case, we define
\[
\rho\doteq \Lambda + \frac{1}{2}|\nabla\phi|^2_{g}+V(\phi)+\frac{1}{2}(|E|^2_g+|B|^2_g)
\]
Thus, we proceed as above to show that, given a solution of the constraint equations satisfying $2\rho-R_g>0$,  the lapse can be determined from the hamiltonian constraint by the expression
\begin{align*}
N=\sqrt{\frac{(\mathrm{tr}_{g} \gamma)^{2}-|\gamma|^{2}_{g}-(\dot{\phi}-d\phi(\beta))^2}{2\rho-R_{g}}},
\end{align*}
where now $\dot{\phi}\doteq \partial_t\bar{\phi}|_{t=0}$ is part of the initial data for evolution system. This implies that the momentum constraint is equivalent to
\begin{align}\label{redeq2}
\begin{split}
&\mathrm{div}_g\left(\sqrt{\frac{2\rho-R_{g}}{(\mathrm{tr}_{g} \gamma)^{2}-|\gamma|^{2}_{g}-(\dot{\phi}-d\phi(\beta))^2}}(\gamma-g\mathrm{tr}_g\gamma)\right)-\\
&\sqrt{\frac{2\rho-R_{g}}{(\mathrm{tr}_{g} \gamma)^{2}-|\gamma|^{2}_{g}-(\dot{\phi}-d\phi(\beta))^2}}\,(\dot{\phi}-d\phi(\beta))d\phi-F(\cdot,E)=0.
\end{split}
\end{align}
The procedure to be followed in this case should be the same as described previously, but considering the above operator instead of the one defined in (\ref{div-S}), and adding the necessary initial data $(\phi,\dot{\phi},E,B)$. Throughout this paper we will use both approaches, treating sometimes the sources either as phenomenological or as fundamental data, what corresponds respectively to the formulations (\ref{div-S}) and (\ref{redeq2}). We will use $\Phi$ to denote the non-linear operator defined in both situations, and we will refer to the resulting system of partial differential equations as the reduced thin sandwich equations (RTSE). Now, since the electromagnetic field produces an extra constraint given by $\mathrm{div}_gE=0$, this equation should be coupled to the above system. However, in the momentum constraint for the shift (in a neighborhood of an appropriately chosen reference solution of the constraint equations), $g$ is treated as a given datum. Hence, the electromagnetic constraint would decouple from the momentum constraint. Indeed, we can suppose that the $E$ has been chosen such that this electromagnetic constraint is satisfied.

In sum, we can affirm that the energy and momentum densities are generated by more fundamental physical fields, namely $\Lambda,\phi,E, F$, which can be treated as independent of the lapse and shift. 

Our discussion will be considerably simplified by the fact that we will be interested in studying the well-posedness of the above system around a vacuum solution with cosmological constant. Indeed, we will consider initial data satisfying $K=\alpha g$, with $\alpha$ a non-zero constant. In this way, the momentum constraint is automatically satisfied, and the hamiltonian constraint reads as follows:
\begin{align}\label{prescribedcurv}
R_g+\alpha^2n(n-1)=2\Lambda.
\end{align}
For solutions such that $g$ is asymptotically euclidean in a sense to be made precise later, the scalar curvature $R_g$ must approach zero as we go to infinity and thus we have to pick $\alpha^2=\frac{2\Lambda}{n(n-1)}$. This means that we must consider a positive cosmological constant. This choice of a cosmological constant implies that  we have a initial configuration which evolves into an space-time which is either expanding or contracting as a whole.
We can produce easy examples starting with 
 an asymptotically flat metric which is scalar flat and corresponds to   
 totally geodesic vacuum solutions of the constraint equations without cosmological constant. Such a solution would have $K=0$ and satisfy $R_g=0$. Thus, if we consider $\alpha^2=\frac{2\Lambda}{n(n-1)}$ and $K'=\alpha g$, we get that $(g,K')$ solves the constraint equations with a positive cosmological constant $\Lambda$. We could get trivial examples from Schwarzschild's initial data or Minkowski's initial data. This is enough to convince ourselves that there are meaningful examples satisfying our requirements. Nonetheless, we will later show that on any manifold euclidean at infinity admitting a Yamabe positive metric there are initial data sets of the type we are considering.

\section{Manifolds euclidean at infinity and elliptic systems}


We now present the basic definition and results concerning elliptic systems defined on manifolds euclidean at infinity that will be used throughout this paper. The details can be found in \cite{CB1}.

\begin{defn}
A $n$-dimensional smooth Riemannian manifold $(M,e)$ is called euclidean at infinity if there exits a compact set $B\subset M$ such that $M\backslash B$ is the disjoint union of a finite number of open sets $U_i$, such that each $(U_i,e)$ is isometric to the exterior of an open ball in Euclidean space.   
\end{defn}

On manifolds euclidean at infinity we define $d=d(x,p)$ the distance in the Riemannian metric $e$ of an arbitrary point $x$ to a fixed point $p$. We will typically omit the dependence on $(x,p)$. 

\begin{defn} Let $(M,e)$ be a manifold euclidean at infinity. A weighted Sobolev space $H_{s,\delta}$, with $s$ a nonnegative integer and $\delta\in\mathbb{R}$ is the space of tensor fields of some given type {\rm (}functions, for instance{\rm)} on $(M,e)$ with generalized derivatives of order up to $s$ in the metric $e$ such that $(1 + d^2)^{\frac{1}{2}(m+\delta)}D^mu\in L^2$, for $0\leq m\leq s$. It is a Banach space with norm
\begin{align}\label{norm}
||u||^{2}_{H_{s,\delta}}\doteq \sum_{m=0}^s\int_M|D^mu|_e^2(1 + d^2)^{(m+\delta)}\mu_{e}
\end{align}
where $D$ represents the $e$-covariant derivative and $\mu_e$ the Riemannian volume form associated with $e$.
\end{defn}

\begin{defn}
We will say that a Riemannian metric $g$ on $M$ is asymptotically euclidean (AE) if $g-e\in H_{s,\delta}$, $s>\frac{n}{2}$ and $\delta>-\frac{n}{2}$.
\end{defn}

These weighted spaces share several of the properties of the usual Sobolev spaces. For instance, if we consider the following weighted norm in the space of $C^k$-tensor fields of some given type
\begin{align*}
||u||_{C^k_{\delta}}=\sup_{x\in M}\sum_{m=0}^k(1+d^2)^{\frac{1}{2}(\delta+m)}|D^m u|_e
\end{align*}
we have a continuous embedding stated as follows:

\begin{lemma}\label{sobolevembedding}
Let $(M, e)$ be a manifold euclidean at infinity.  If $s' <s -n/2$ and $\delta' <\delta +n/2$, the inclusion
\begin{align*}
H_{s,\delta}\subset C^{s'}_{\delta'}
\end{align*}
holds and it is continuous.
\end{lemma}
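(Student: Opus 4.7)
The plan is to reduce the weighted embedding to the classical Sobolev embedding on a fixed bounded domain, via a dyadic rescaling argument on each end. First I would cover $M$ by a slight enlargement of the compact core $B$ together with its complement, which is a disjoint union of ends $U_i$, each isometric to the exterior of a ball in $\mathbb{R}^n$. On the compact piece, the weight $(1+d^2)^{(m+\delta)/2}$ is bounded above and below, so the weighted norm is equivalent there to the usual $H^s$ norm, and the classical Sobolev embedding $H^s\hookrightarrow C^{s'}$ (valid because $s'<s-n/2$) handles that part at once.

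On each end I would perform a dyadic decomposition into annuli $A_k=\{2^k\le d\le 2^{k+1}\}$ for $k\geq k_0$, and rescale each $A_k$ onto the fixed annulus $A=\{1\le |y|\le 2\}$ in $\mathbb{R}^n$ via $y=2^{-k}x$; set $\tilde u(y):=u(2^ky)$. The scaling relations $|D^m u|=2^{-km}|D^m_y\tilde u|$, $dx=2^{kn}dy$, and $(1+d^2)^{1/2}\asymp 2^k$ on $A_k$ yield, after a short computation,
\[
\|u\|_{H_{s,\delta}(A_k)}\asymp 2^{k(\delta+n/2)}\,\|\tilde u\|_{H^s(A)},
\]
and, for each $0\le m\le s'$,
\[
\sup_{A_k}(1+d^2)^{(\delta'+m)/2}|D^m u|\asymp 2^{k\delta'}\sup_{A}|D^m_y\tilde u|,
\]
with equivalence constants independent of $k$ because $A$ is a single fixed domain.

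Applying the classical Sobolev embedding $H^s(A)\hookrightarrow C^{s'}(A)$ to $\tilde u$ (which is where the hypothesis $s'<s-n/2$ enters) and then translating the estimate back through the displayed relations gives
\[
\sup_{A_k}(1+d^2)^{(\delta'+m)/2}|D^m u|\le C\,2^{k(\delta'-\delta-n/2)}\|u\|_{H_{s,\delta}(A_k)}.
\]
The hypothesis $\delta'<\delta+n/2$ makes the exponent strictly negative, so the prefactor is uniformly bounded (indeed decaying) in $k$; taking the supremum over $k$ and combining with the estimate on the compact core produces the continuous inclusion $H_{s,\delta}\subset C^{s'}_{\delta'}$. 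The part that needs the most care is simply the bookkeeping of the two scaling exponents, namely $\delta+n/2$ coming from the $L^2$ side and $\delta'$ from the $C^0$ side; once the scaling is pinned down, all of the analytic content is supplied by the classical Sobolev embedding on a single fixed annulus, which is what makes the constants uniform in $k$.
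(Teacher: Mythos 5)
Your argument is correct. The paper itself offers no proof of this lemma --- it is quoted from Choquet-Bruhat--Christodoulou \cite{CB1} --- and your dyadic rescaling onto a fixed annulus is precisely the standard argument used in that reference (and in Bartnik's weighted-space papers): the exponent bookkeeping $\|u\|_{H_{s,\delta}(A_k)}\asymp 2^{k(\delta+n/2)}\|\tilde u\|_{H^s(A)}$ and the $2^{k\delta'}$ scaling of the weighted sup-norm are both right, and the condition $\delta'<\delta+n/2$ enters exactly where you say it does. The only point worth making explicit is that the classical embedding $H^s(A)\hookrightarrow C^{s'}(\bar A)$ on the fixed annulus requires $A$ to be a bounded Lipschitz (extension) domain, which it is, so the constant is indeed independent of $k$.
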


Also, the following multiplication property is very useful to analyse the range of differential opertaros acting between these weighted spaces. 

\begin{lemma}
If $(M,e)$ is a manifold euclidean at infinity, then the following continuous multiplication property holds
\begin{align*}
H_{s_1,\delta_2}\times H_{s_2,\delta_2}&\mapsto H_{s,\delta},\\
(f_1,f_2)&\mapsto f_1\otimes f_2,
\end{align*}
if $s_1,s_2\geq s$, $s<s_1+s_2-\frac{n}{2}$ and $\delta<\delta_1+\delta_2+\frac{n}{2}$.
\end{lemma}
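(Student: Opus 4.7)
The strategy is to reduce the weighted estimate to the classical (unweighted) Sobolev multiplication on a bounded Euclidean domain by localizing to a compact core plus the ends and performing a dyadic rescaling on each end. First, I fix a smooth partition of unity $\{\chi_0,\chi_1,\ldots,\chi_N\}$ subordinate to a finite cover of $M$ consisting of a precompact open neighbourhood of $B$ together with the ends $U_1,\ldots,U_N$; it then suffices to prove the estimate separately for each localized product $\chi_\alpha(f_1\otimes f_2)$. On the compact core, the weight $(1+d^2)^{1/2}$ is bounded above and below by positive constants, so the weighted norm is equivalent to the usual Sobolev norm on a relatively compact domain, and the claim reduces to the classical Sobolev multiplication inequality under exactly the hypotheses $s_1,s_2\geq s$ and $s<s_1+s_2-n/2$.

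Next, I treat each end separately. Using the isometry, I work on $\mathbb{R}^n\setminus B_R(0)$, where $d(x)\sim |x|$, and decompose this region into the dyadic annuli $A_k=\{2^{k-1}\leq |x|\leq 2^{k+1}\}$, $k\geq k_0$, on which $(1+|x|^2)^{1/2}$ is comparable to $2^k$. The dilation $\tilde u(y)\doteq u(2^k y)$ sends $A_k$ onto a fixed unit annulus $A$, scales $m$-th derivatives by $2^{-km}$, and multiplies the volume form by $2^{kn}$. A direct computation shows that the powers $2^{-2km}$ from the derivatives, $2^{2k(m+\delta)}$ from the weight, and $2^{kn}$ from the Jacobian combine to give a single scalar factor independent of $m$,
\[
\|u\|_{H_{s,\delta}(A_k)}^2 \;\sim\; 2^{k(2\delta+n)}\,\|\tilde u\|_{H^s(A)}^2.
\]
Applying the classical Sobolev multiplication $\|\tilde f_1\tilde f_2\|_{H^s(A)}\lesssim \|\tilde f_1\|_{H^{s_1}(A)}\|\tilde f_2\|_{H^{s_2}(A)}$ on the fixed annulus $A$ and undoing the scaling produces the annular estimate
\[
\|f_1\otimes f_2\|_{H_{s,\delta}(A_k)}\;\lesssim\; 2^{k(\delta-\delta_1-\delta_2-n/2)}\,\|f_1\|_{H_{s_1,\delta_1}(A_k)}\,\|f_2\|_{H_{s_2,\delta_2}(A_k)}.
\]

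To conclude, I sum the squares of these annular estimates over $k\geq k_0$. Since the dyadic annuli cover the end with bounded multiplicity, each annular factor is dominated by the corresponding full norm $\|f_i\|_{H_{s_i,\delta_i}}$, and the hypothesis $\delta<\delta_1+\delta_2+n/2$ turns the geometric factor $2^{2k(\delta-\delta_1-\delta_2-n/2)}$ into a convergent geometric series, so that the desired global bound $\|f_1\otimes f_2\|_{H_{s,\delta}}\lesssim \|f_1\|_{H_{s_1,\delta_1}}\|f_2\|_{H_{s_2,\delta_2}}$ drops out. The main technical obstacle is the bookkeeping in the scaling step: one has to verify that the factors coming from the derivatives, the weight and the Jacobian collapse to a single exponent $2\delta+n$ independent of $m$, and then that this exponent balances against the summation in $k$ so that the threshold $\delta<\delta_1+\delta_2+n/2$ is precisely what forces convergence of the dyadic geometric series, while the index conditions $s_1,s_2\geq s$ and $s<s_1+s_2-n/2$ are exactly those that make the underlying unweighted multiplication on $A$ available.
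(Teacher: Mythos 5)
Your argument is correct, and it is worth noting that the paper itself offers no proof of this lemma: it is quoted verbatim from Choquet-Bruhat--Christodoulou \cite{CB1} as background material (and contains a typo there, $H_{s_1,\delta_2}$ in place of $H_{s_1,\delta_1}$, which you implicitly and correctly repair). Your dyadic-rescaling proof is the standard self-contained route (it is essentially Bartnik's argument in \cite{bartnik2}): the exponent bookkeeping $\|u\|^2_{H_{s,\delta}(A_k)}\sim 2^{k(2\delta+n)}\|\tilde u\|^2_{H^s(A)}$ is right, the factor $2^{2k(\delta-\delta_1-\delta_2-n/2)}$ in the annular product estimate follows, and bounding both localized factor norms by the global ones reduces the sum over $k$ to a geometric series whose convergence is exactly the hypothesis $\delta<\delta_1+\delta_2+\frac{n}{2}$, while the conditions $s_1,s_2\geq s$ and $s<s_1+s_2-\frac{n}{2}$ are precisely what the unweighted multiplication on the fixed annulus requires. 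Two cosmetic points: the dilation $\tilde u(y)=u(2^ky)$ multiplies $m$-th derivatives by $2^{km}$ (not $2^{-km}$; your final formula is nonetheless consistent), and since the lemma is stated for tensor fields one should remark that on each chart the tensor product is just a finite collection of componentwise scalar products, so the scalar estimate suffices. An alternative, equally standard route (closer to what \cite{CB1} actually does) avoids rescaling altogether and instead combines H\"older's inequality with the weighted embeddings $H_{s,\delta}\subset C^{s'}_{\delta'}$ and $H_{s,\delta}\subset L^p_{\delta'}$, distributing derivatives of the product via Leibniz; your approach buys a cleaner verification of the sharpness of the weight condition, at the cost of the annular bookkeeping.
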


We will need to analyse linear elliptic operators in this context. For us, the following setting will be general enough. Consider a linear elliptic operator of the following from:
\begin{align*}
L=\sum_{m=0}^{N} a_m D^m
\end{align*}
acting between sections of some tensor bundles $E$ and $F$ over an asymptotically euclidean manifold $(M,e)$. Suppose that the coefficients $a_m$, $0\leq m\leq N$, are sections of the tensor bundles $(\otimes TM)^m\otimes E^{*}\otimes F$ and satisfy the following regularity conditions:
\begin{align*}
&1)\, a_m\in H_{s_m,\delta_m}, 0\leq m\leq N-1,\\
&2)\, a_N-A \in H_{s_N,\delta_N},
\end{align*}
where $s_m>\frac{n}{2}+m-N+1$ and $\delta_m>N-m-\frac{n}{2}$ for $0\leq m\leq N$, and $A$ is a smooth tensor field on $M$ which is constant in each end, and such that $L_{\infty}\doteq A\cdot D^N$ is elliptic. Under these assumptions we have that, if $s_m\geq s-N\geq 0$, then $L$ defines a continuous map from $H_{s,\delta}$ to $H_{s-N,\delta+N}$, for any  $\delta\in \mathbb{R}$. Furthermore, the following theorem holds (see \cite{CB1}):

\begin{thm}\label{elliptic}
Under the above hypotheses, if $s\geq N$, $s_m\geq s-N$ and $-\frac{n}{2}< \delta< \frac{n}{2}-N$, then $L$ maps $H_{s,\delta}$ into $H_{s-N,\delta+N}$ with finite dimensional kernel and closed range. 
\end{thm}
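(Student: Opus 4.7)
The plan is to exhibit $L$ as a compact perturbation of an operator that is already semi-Fredholm between the relevant weighted spaces, and then conclude by the stability of the semi-Fredholm property under compact perturbations. Concretely, write
\[
L = L_\infty + P, \qquad P \doteq (a_N - A) D^N + \sum_{m=0}^{N-1} a_m D^m,
\]
so $L_\infty$ has constant coefficients on each end, while $P$ has coefficients that decay at infinity (quantified by the weights $\delta_m$, which by hypothesis strictly exceed $N-m-\tfrac{n}{2}$). The target is a Fredholm-type inequality for $L$; finite-dimensional kernel and closed range then follow by a standard functional-analytic argument.

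First I would verify that $L$, $L_\infty$ and $P$ each define continuous maps $H_{s,\delta}\to H_{s-N,\delta+N}$. The bound for each summand $a_m D^m u$ is an application of the multiplication lemma with $s_1=s_m$, $s_2=s-m$, $\delta_1=\delta_m$, $\delta_2=\delta+m$: the assumed conditions $s_m\geq s-N$, $s_m>\tfrac{n}{2}+m-N+1$ and $\delta_m>N-m-\tfrac{n}{2}$ are precisely what make the multiplication lemma land in $H_{s-N,\delta+N}$. For $L_\infty$ itself I would invoke the Nirenberg--Walker--Cantor--McOwen theory on weighted Sobolev spaces: restricting to each end, where $L_\infty$ is a constant-coefficient homogeneous elliptic operator on an exterior Euclidean domain, the classical isomorphism theorem in the non-exceptional weight range $-\tfrac{n}{2}<\delta<\tfrac{n}{2}-N$ applies; gluing these ends to interior elliptic regularity on the compact core $B$ via a partition of unity yields a global semi-Fredholm estimate
\[
\|u\|_{H_{s,\delta}} \leq C\bigl(\|L_\infty u\|_{H_{s-N,\delta+N}} + \|u\|_{L^2(K)}\bigr)
\]
for some compact $K\subset M$.

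The heart of the argument, and the step that I expect to be the main obstacle, is the compactness of $P\colon H_{s,\delta}\to H_{s-N,\delta+N}$. The key observation is that the multiplication estimate for the summands of $P$ leaves strict slack in the weight exponent: since $\delta_m>N-m-\tfrac{n}{2}$, each term in fact maps boundedly into $H_{s-N,\delta+N+\varepsilon}$ for some $\varepsilon>0$. One then composes with the weighted Rellich--Kondrachov-type inclusion $H_{s-N,\delta+N+\varepsilon}\hookrightarrow H_{s-N,\delta+N}$, which is compact on manifolds euclidean at infinity: on the compact core ordinary Rellich gives compactness, while on the ends the strictly positive excess weight $\varepsilon$ supplies the tightness at infinity via a diagonal/cutoff argument on an exhaustion. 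The same strategy handles the leading perturbation $(a_N-A)D^N$ using $a_N-A\in H_{s_N,\delta_N}$. Matching the indices to guarantee the slack $\varepsilon>0$ without violating the ranges required by the multiplication lemma is the delicate bookkeeping step. Once $P$ is shown compact, the inequality above, combined with the preservation of the semi-Fredholm property under compact perturbations, gives the analogous a priori estimate for $L$, from which $\dim\ker L<\infty$ and $\mathrm{Im}\,L$ closed follow in the standard way.
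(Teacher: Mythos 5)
The paper does not prove this theorem; it is quoted verbatim from Choquet-Bruhat--Christodoulou \cite{CB1}, so your proposal has to be judged against the standard argument there. Your overall architecture (constant-coefficient model operator $L_\infty$ on the ends, semi-Fredholm estimate for $L_\infty$ in the non-exceptional weight range, perturbation argument for $L-L_\infty$, then Peetre's lemma) is indeed the right skeleton and matches the cited proof. The a priori estimate for $L_\infty$ via the exterior-domain isomorphism theorem plus interior elliptic regularity on the core is fine.

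The gap is in the step you yourself flag as the heart of the matter: $P=(a_N-A)D^N+\sum_{m<N}a_mD^m$ is \emph{not} compact from $H_{s,\delta}$ to $H_{s-N,\delta+N}$, and the mechanism you invoke fails. The inclusion $H_{s-N,\delta+N+\varepsilon}\hookrightarrow H_{s-N,\delta+N}$ is bounded but not compact: compactness of weighted Rellich--Kondrachov embeddings requires a strict loss of derivatives \emph{in addition to} a strict gain in weight, since on any fixed ball the two norms are equivalent to the ordinary $H^{s-N}$ norm and the identity on $H^{s-N}(B)$ is not compact. Gaining only in the weight controls the tails but does nothing on the compact core. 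This is fatal in particular for the principal perturbation $(a_N-A)D^N$: take $a_N-A$ compactly supported and nonzero, and a bounded sequence $u_k$ in $H_{s,\delta}$ oscillating on its support; then $(a_N-A)D^Nu_k$ has no convergent subsequence in $H_{s-N,\delta+N}$. The correct treatment in \cite{CB1} (and in Cantor/McOwen) is different: the top-order perturbation is handled by \emph{smallness}, not compactness --- since $a_N-A\in H_{s_N,\delta_N}$ with $\delta_N>-\frac{n}{2}$ embeds into $C^0_{\delta'}$ for some $\delta'>0$, the coefficient perturbation is uniformly small outside a large compact set and can be absorbed into the $L_\infty$-estimate there, while on the compact piece one uses interior elliptic estimates for $L$ itself. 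The genuinely compact ingredient is only the remainder term $\|u\|_{H_{s-1,\delta-1}}$ (or $\|u\|_{H^{s-1}(K)}$) in the resulting estimate
\begin{equation*}
\|u\|_{H_{s,\delta}}\;\leq\;C\left(\|Lu\|_{H_{s-N,\delta+N}}+\|u\|_{H_{s-1,\delta-1}}\right),
\end{equation*}
where the embedding $H_{s,\delta}\hookrightarrow H_{s-1,\delta-1}$ loses both a derivative and a weight and hence \emph{is} compact; Peetre's lemma then gives finite-dimensional kernel and closed range. As written, your argument would need to be restructured along these lines.
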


We will need a few more results concerning weighted Sobolev spaces on manifolds euclidean at infinity. The following lemma is crucial to justify some integration by parts procedures typically performed when analysing the kernel of an elliptic operator of the kind described above. For the proof see, for instance, \cite{CB2}. From now on, we will only consider linear operators $L$ of second order.

\begin{lemma}\label{integrationbyparts} Suppose that the coefficients of a second order linear opeator $L$ satisfy $a_2-A\in H_{s,\delta_q}$, $a_1\in H_{s-1,\delta_q+1}$ and $a_0\in H_{s-2,\delta_q+2}$, with $s>\frac{n}{2}$, $s\geq 2$, $\delta_q>-\frac{n}{2}$. 
Given $f\in H_{s-2,\tilde{\delta}+2}$ with $\tilde{\delta} \ge\delta_q$, let $u\in H_{s,\delta_q}$ be a solution of $Lu=f$.
Then, $u\in H_{s,\tilde{\delta}}$ if $\tilde{\delta}<\frac{n}{2}-2$.
\end{lemma}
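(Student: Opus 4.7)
The overall plan is to bootstrap the decay rate of $u$ from $\delta_q$ up to $\tilde\delta$ via a finite iteration. I would construct weights $\delta_0=\delta_q<\delta_1<\cdots<\delta_K=\tilde\delta$ with each increment bounded below by a fixed $\eta>0$ depending only on $\delta_q$ and $n$, and show inductively that $u\in H_{s,\delta_k}$ at every stage, combining Theorem \ref{elliptic}, the multiplication lemma, and an elliptic regularity argument on the ends for the model operator.

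The key decomposition is $L=L_\infty+R$, where $L_\infty=A\cdot D^2$ is the asymptotic constant-coefficient elliptic part (which by hypothesis is elliptic and constant on each end) and $R=(a_2-A)D^2+a_1 D+a_0$ collects the decaying remainder, so that $Lu=f$ becomes
\[
L_\infty u=f-Ru.
\]
Assuming inductively $u\in H_{s,\delta_k}$ with $\delta_k\le\tilde\delta$, I would apply the multiplication lemma term by term: since $a_2-A\in H_{s,\delta_q}$ and $D^2u\in H_{s-2,\delta_k+2}$, the product $(a_2-A)D^2u$ lies in $H_{s-2,\delta'+2}$ for any $\delta'<\delta_k+\delta_q+\tfrac{n}{2}$, and analogous estimates hold for $a_1Du$ and $a_0u$. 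Because $\delta_q+\tfrac{n}{2}>0$, this guarantees a strict gain in decay; setting $\delta_{k+1}=\min(\tilde\delta,\delta_k+\eta)$ for a suitable $\eta>0$ and noting $f\in H_{s-2,\tilde\delta+2}\hookrightarrow H_{s-2,\delta_{k+1}+2}$, I would conclude that $f-Ru\in H_{s-2,\delta_{k+1}+2}$.

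To upgrade $u$ itself, I would introduce a smooth cutoff $\chi$ supported in the union of the ends and identically $1$ outside a larger compact set, so that
\[
L_\infty(\chi u)=\chi(f-Ru)+[L_\infty,\chi]u,
\]
where the commutator is compactly supported and hence lies in every weighted space. On each end, $L_\infty$ reduces to a constant-coefficient elliptic operator on the exterior of a Euclidean ball, for which the classical Nirenberg--Walker / Cantor--McOwen weighted theory yields the desired isomorphism (and the attendant weighted norm estimate) in the admissible range $-\tfrac{n}{2}<\delta<\tfrac{n}{2}-2$. This is precisely where the hypothesis $\tilde\delta<\tfrac{n}{2}-2$ is used: it keeps every $\delta_{k+1}$ inside the window where the constant-coefficient model can be inverted. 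Combining this with standard interior elliptic regularity for $Lu=f$ on compact sets yields $u\in H_{s,\delta_{k+1}}$, and the iteration terminates in $\lceil(\tilde\delta-\delta_q)/\eta\rceil$ steps.

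The main obstacle I anticipate is the weight-raising step for $L_\infty$: combining the global weighted theory on the Euclidean ends with interior elliptic regularity on the compact core, controlling the cutoff commutator without losing weight, and verifying that each incremental $\delta_{k+1}$ genuinely lies in the open admissible band of the constant-coefficient theory. The strict inequalities $\delta_q>-\tfrac{n}{2}$ (which makes $\eta>0$ available through the multiplication lemma) and $\tilde\delta<\tfrac{n}{2}-2$ (which preserves the window at every step) play the crucial role in closing the argument.
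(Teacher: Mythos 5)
Your proposal is correct and is essentially the standard argument: the paper itself gives no proof of this lemma but defers to \cite{CB2}, where the result is obtained by exactly this kind of bootstrap — writing $L_\infty u = f - Ru$, gaining weight $\delta_q+\tfrac{n}{2}>0$ per step via the multiplication lemma, and invoking the isomorphism of the constant-coefficient model on the ends in the window $-\tfrac{n}{2}<\delta<\tfrac{n}{2}-2$. The only detail worth making explicit in your weight-raising step is the uniqueness argument identifying $\chi u$ with the better-decaying solution of $L_\infty w=L_\infty(\chi u)$, which holds because the kernel of $L_\infty$ in $H_{s,\delta}$ is trivial for $\delta>-\tfrac{n}{2}$.
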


Finally, we will use the following lemma, which establishes a weighted Poincar\'e inequality for manifolds euclidean at infinity. Such inequality was proved by Robert Bartnik in $\mathbb{R}^n$ in \cite{bartnik2}, and the more general proof can be found in \cite{maxwell-dilts}.

\begin{lemma}\label{poincare}
Let $(M,g)$ be an $n$-dimensional manifold, $n\geq 3$, which is $H_{2,\delta}$-AE, with $\delta>-\frac{n}{2}$. Then, there exists a constant depending on $n$ such that for any function $u\in H_{1,-1}$ the following inequality holds
\begin{align}
||u||_{L^2_{-1}}\leq C||Du||_{L^2}.
\end{align} 
\end{lemma}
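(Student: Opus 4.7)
The plan is to prove this as a weighted Hardy-type inequality, first establishing it on $\mathbb{R}^n$ and then transferring to the AE manifold via a cutoff combined with a compactness argument for the compact core. The model computation uses the vector field $X=x/(1+|x|^2)$ on $\mathbb{R}^n$, for which one calculates
\begin{align*}
\mathrm{div}(X)=\frac{(n-2)|x|^2+n}{(1+|x|^2)^2}\ge \frac{n-2}{1+|x|^2}.
\end{align*}
For $u\in C_c^\infty(\mathbb{R}^n)$, integrating $u^2\,\mathrm{div}(X)$ by parts and applying Cauchy--Schwarz gives
\begin{align*}
(n-2)\,||u||_{L^2_{-1}}^{2}\le -2\int_{\mathbb{R}^n}\frac{u\,x\cdot Du}{1+|x|^2}\,dx\le 2\,||u||_{L^2_{-1}}\,||Du||_{L^2},
\end{align*}
so $||u||_{L^2_{-1}}\le \frac{2}{n-2}\,||Du||_{L^2}$, and density of $C_c^\infty$ in $H_{1,-1}(\mathbb{R}^n)$ extends the bound to the whole space. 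The hypothesis $n\ge 3$ enters both to produce a finite constant and, more essentially, to ensure that no nonzero constant lies in $H_{1,-1}$.

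To transfer this to the AE manifold, I would pick $R$ large enough that $\{d\ge R\}\subset\bigcup_i U_i$ and choose a smooth cutoff $\chi_R$ with $\chi_R=1$ on $\{d\ge 2R\}$ and $\chi_R=0$ on $\{d\le R\}$. Using the isometry between each end $U_i$ and the exterior of a ball in $\mathbb{R}^n$, the Euclidean estimate applied end-by-end to the push-forward of $\chi_R u$ yields
\begin{align*}
||\chi_R u||_{L^2_{-1}}\le C\,||D(\chi_R u)||_{L^2}\le C\,||Du||_{L^2}+C_R\,||u||_{L^2(\{R\le d\le 2R\})}.
\end{align*}
On the complementary compact region $\{d\le 2R\}$ the weight $(1+d^2)^{-1}$ is bounded above and below by positive constants, so $||(1-\chi_R)u||_{L^2_{-1}}$ is comparable to $||u||_{L^2(\{d\le 2R\})}$. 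Adding the two estimates reduces the problem to absorbing the single remainder $||u||_{L^2(\{d\le 2R\})}$ into $||Du||_{L^2(M)}$.

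This last absorption I would perform by a contradiction/compactness argument. If no global inequality existed, there would be a sequence $u_k\in H_{1,-1}$ with $||u_k||_{L^2_{-1}}=1$ and $||Du_k||_{L^2}\to 0$. Passing to a subsequence, weak compactness in $H_{1,-1}(M)$ together with the Rellich theorem on each bounded region produce a limit $u_\infty$ with $u_k\to u_\infty$ strongly in $L^2(\{d\le 2R\})$ for every $R$, and $Du_\infty=0$, so $u_\infty$ is constant on each component of the connected manifold $M$. Since no nonzero constant belongs to $H_{1,-1}$ when $n\ge 3$, because $\int(1+d^2)^{-1}\,d\mu_e$ diverges on any euclidean end, one has $u_\infty\equiv 0$. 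Inserted into the combined estimate of the previous paragraph, this forces $||u_k||_{L^2_{-1}}\to 0$, contradicting the normalization.

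The main obstacle is precisely this final step: the weight $(1+d^2)^{-1}$ provides no decay on a bounded region, so a direct pointwise Poincar\'e bound of $||u||_{L^2}$ over the core by $||Du||_{L^2}$ is unavailable, and one must appeal to a nonlocal compactness/uniqueness argument that exploits the decay of $u$ at infinity. The dimensional hypothesis $n\ge 3$ is decisive here, since it is exactly what ensures that only the zero constant belongs to $H_{1,-1}$, so that the contradiction actually closes.
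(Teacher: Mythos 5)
Your proof is correct, and it is essentially the standard argument for this result: the paper itself offers no proof but cites Bartnik and Dilts--Maxwell, whose proofs run along exactly these lines (a Hardy-type divergence identity with the vector field $x/(1+|x|^2)$ on the Euclidean ends, a cutoff, and a compactness/contradiction step for the core, with $n\ge 3$ ruling out nonzero constants in $H_{1,-1}$). The only caveat is that the compactness step yields a constant depending on $(M,g)$ rather than on $n$ alone, but this is consistent with how the paper actually uses the inequality later (the constant $C_g$ in Proposition \ref{poincare} and Corollary \ref{prescurv1}).
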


\section{Sufficient conditions for the existence of solutions on AE manifolds}

In this section we will establish conditions for the well-posedness of the system (\ref{redeq2}) on a manifold euclidean at infinity $(M,e)$, around a reference solution $(\psi_0,\beta_0)$, where $\psi_0=(g_0,\dot{g}_0,\Lambda,\epsilon_0,S_0)$. 
We set $\epsilon_0=0$, $S_0=0$ and $\Lambda>0$. We also assume that $g_0$ is $H_{s+3,\delta+1}$-asymptotically flat with $R_{g_0}=0$, for some $s>\frac{n}{2}$ and $\delta>-\frac{n}{2}$. Finally, we fix $K_0=\alpha g_0$, with $\alpha^2=\frac{2\Lambda}{n(n-1)}$.
As discussed above these choices would provide a solution of the constraint equations. Furthermore, from this solution, by choosing a strictly positive and bounded lapse function $N_0$ satisfying $N_0-1\in H_{s+3,\delta+1}$ and a shift vector $\beta_0\in H_{s+2,\delta}$, we get
\begin{align}
\dot{g}_0=-2\alpha N_0 g_0 + \pounds_{\beta_0}g_0,
\end{align}
It will be useful to note that, having metrics $g$ and $e$ defined on $M$, their covariant derivatives will be related by means of a tensor field $S$, defined in coordinates by
\begin{align}
\begin{split}
S^i_{jk}&\doteq{}^{g}\,\!\Gamma^i_{kj}-{}^{e}\,\!\Gamma^{i}_{jk}=\frac{g^{iu}}{2}(D_jg_{uk}+D_kg_{uj}-D_{u}g_{jk}),
\end{split}
\end{align}
where ${}^{g}\,\!\Gamma^i_{kj}$ and ${}^{e}\,\!\Gamma^{i}_{jk}$ denote the connection coefficients related with $g$ and $e$ respectively. Using the above expression, the fact that $g-e\in H_{s+3,\delta+1}$ and the multiplication property, we see that $S\in H_{s+2,\delta+2}$. Using this fact and the multiplication property, we get that differentiation with respect to $\nabla$ takes $H_{s,\delta}$ to $H_{s-1,\delta+1}$. Thus, we find that $\dot{g}_0+2\alpha e\in H_{s+1,\delta+1}$. 

Since we are restricted to a neighbourhood of the fixed reference solution, we only consider metrics $g$ which are $H_{s+3,\delta+1}$-asymptotically flat and such that $||g-g_0||_{H_{s+3,\delta+1}}$ is small enough. In a similar way, we will consider $\dot{g}$ with the same asymptotic behaviour as $\dot{g}_0$, that is, $\dot{g}+2\alpha e\in H_{s+1,\delta+1}$, such that $||\dot{g}-\dot{g}_0||_{H_{s+1,\delta+1}}$ is sufficiently small. In this way it will be useful for us to rewrite $\dot{g}=-2\alpha e + \delta \dot{g}$
and take that $\delta\dot{g}\in H_{s+1,\delta+1}$ as part of the data that is actually freely specified. Since the same can be done with $g$, we will also take $\delta g\doteq g-e\in H_{s+3,\delta+1}$ as a freely given datum. In this way, it is clear that $\gamma$ is a continuous function of $\delta g$, $\delta\dot{g}$ and $\beta$, using the previously discussed functional spaces, with $\gamma+\alpha e\in H_{s+1,\delta+1}$. Furthermore, we get that $\mathrm{tr}_g\gamma+n\alpha\in H_{s+1,\delta+1}$ and $|\gamma|_g^2-n\alpha^2 \in H_{s+1,\delta+1}$. For the remaining initial data, we assume that  $\phi\in H_{s+2,\delta+2}$, $\dot{\phi}\in H_{s+1,\delta+2}$, $E\in H_{s+1,\delta+2}$ and $F\in H_{s+1,\delta+2}$. Finally, we suppose that the potential function $V$ is given by a continuous map from $H_{s+2,\delta+2}$ to $H_{s+2,\delta+3}$. Fixed this setting, one gets that  
\[
N(\delta g,\phi,E,F,\delta\dot{g},\dot{\phi},\beta)\neq 0
\]
in a small enough neighbourhood of the reference solution. Hence, $N^{-1}$ is well defined in such a neighborhood, and furthermore, $N^{-1}-1\in H_{s+1,\delta+1}$ on it. 

For simplicity, we denote  $\mathcal{E}_1\doteq H_{s+3,\delta+1}\times H_{s+2,\delta+2}\times H_{s+1,\delta+2}\times H_{s+1,\delta+2}\times H_{s+1,\delta+1}\times H_{s+1,\delta+2}$, $\mathcal{E}_2\doteq H_{s+2,\delta}$ and $\mathcal{Z}\doteq H_{s,\delta+2}$.  The freely specified initial data are denoted by  $\psi\doteq (\delta g,\phi,E,F,\delta\dot{g},\dot{\phi})\in \mathcal{E}_1$. Hence,  we pick up a small enough neighbourhood $U$ of the initial data for the reference solution $(\delta g_0,0,0,0,\delta\dot{g}_0,0,\beta_0)$ such that the map $\Phi$ given by (\ref{redeq2}) can be written as a $C^1$-map of the form
\begin{align}
\centering
\begin{split}
 \Phi: U\subset \mathcal{E}_1 \times \mathcal{E}_2 \to \mathcal{Z}\\
(\psi,\beta)  =(\delta g,\phi,E,F,\delta\dot{g},\dot{\phi},\beta)&\mapsto \Phi(\delta g,\phi,E,F,\delta\dot{g},\dot{\phi},\beta),
\end{split}
\end{align}
Now, notice that the derivative of $\Phi$ with respect to the variable $\beta\in \mathcal{E}_2$ at a point $(\psi,\beta)$ is given by 
\begin{align}\label{linearisation.0}
\begin{split}
D_2\Phi_{(\psi,\beta)}\cdot Y=&\mathrm{div}_g\left( \frac{1}{N}\left( g\mathrm{div}_gY - \frac{1}{2}\pounds_{Y}g - \frac{1}{2\rho - R_g}\left( \langle P_g,\nabla Y\rangle_g - \pi d\phi(Y) \right)P_g \right) \right)\\
&-\frac{1}{N}\left( \frac{1}{2\rho - R_g}\left( \langle P_g,\nabla Y\rangle_g - \pi d\phi(Y) \right)\pi - d\phi(Y) \right)d\phi,
\end{split}
\end{align}
where
\begin{align*}
\pi&\doteq N^{-1}(\dot{\phi} - d\phi(\beta)),\\
P_g&\doteq K - g\mathrm{tr}_gK,
\end{align*}
and the rest of the quantities have already been defined. From this, we get the following computational lemma.
\begin{lemma}
The linearisation of the map $\Phi:U\subset\mathcal{E}_1\times \mathcal{E}_2\mapsto \mathcal{Z}$ around a vacuum solution $(g_0,K_0)$ of the constraint equations, with $K_0=\alpha g_0$ for some constant $\alpha>0$, producing $\psi_0=(\delta g_0,0,0,0,\delta\dot{g}_0,0,\beta_0)\in \mathcal{E}_1\times\mathcal{E}_2$, is given by 
\begin{align}
D_2\Phi_{(\psi_0,\beta_0)}\cdot Y=-\frac{1}{2}\mathrm{div}_g\left(\frac{1}{N_0}\pounds_{g_0,conf}Y \right),
\end{align}
where
\begin{align*}
\pounds_{g_0,conf}Y\doteq \pounds_Yg_0 - \frac{2}{n}g_0\mathrm{div}_{g_0}Y.
\end{align*}
\end{lemma}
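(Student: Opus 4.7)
The proof plan is to evaluate the general formula (\ref{linearisation.0}) at the reference data $(\psi_0,\beta_0)=(\delta g_0,0,0,0,\delta\dot g_0,0,\beta_0)$, where almost every auxiliary quantity collapses to something extremely simple, and then to simplify the resulting expression algebraically. Since the lemma is explicitly labeled \emph{computational}, I do not expect any conceptual obstacle, only bookkeeping; the only point that requires a small external input is the evaluation of $2\rho-R_g$ at the reference solution, which uses the vacuum hypothesis together with the specific choice $\alpha^2=\tfrac{2\Lambda}{n(n-1)}$.

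First, I would record the values at the reference solution: since $\phi_0=0$ and $\dot\phi_0=0$, one has $d\phi=0$ and
\[
\pi=N_0^{-1}\bigl(\dot\phi-d\phi(\beta_0)\bigr)=0,
\]
which immediately kills the entire term outside the divergence in (\ref{linearisation.0}) as well as the subterm $\pi\,d\phi(Y)$ inside the brackets. Next, using $K_0=\alpha g_0$, the tensor $P_g$ evaluated at the reference data becomes
\[
P_{g_0}=K_0-g_0\,\mathrm{tr}_{g_0}K_0=\alpha g_0-n\alpha g_0=-(n-1)\alpha\,g_0.
\]
In particular, since $P_{g_0}$ is a multiple of $g_0$, the pairing $\langle P_{g_0},\nabla Y\rangle_{g_0}$ reduces to
\[
\langle P_{g_0},\nabla Y\rangle_{g_0}=-(n-1)\alpha\,\mathrm{div}_{g_0}Y.
\]
Third, I compute $2\rho_0-R_{g_0}$. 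Under the vacuum-with-$\Lambda$ hypothesis and the assumption $R_{g_0}=0$, one has $\rho_0=\Lambda$ (all matter contributions vanish), so
\[
2\rho_0-R_{g_0}=2\Lambda=n(n-1)\alpha^2,
\]
the last equality being the choice of $\alpha$ made above.

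With these ingredients, the term $\frac{1}{2\rho-R_g}\langle P_g,\nabla Y\rangle_g P_g$ in (\ref{linearisation.0}) evaluates at the reference solution to
\[
\frac{1}{n(n-1)\alpha^{2}}\bigl(-(n-1)\alpha\,\mathrm{div}_{g_0}Y\bigr)\bigl(-(n-1)\alpha\,g_0\bigr)=\frac{n-1}{n}\,g_0\,\mathrm{div}_{g_0}Y.
\]
Plugging this into (\ref{linearisation.0}) with the terms involving $d\phi$ and $\pi$ deleted, the expression inside the outer divergence becomes
\[
\frac{1}{N_0}\!\left(g_0\,\mathrm{div}_{g_0}Y-\tfrac{1}{2}\pounds_Y g_0-\tfrac{n-1}{n}g_0\,\mathrm{div}_{g_0}Y\right)=-\frac{1}{2N_0}\!\left(\pounds_Y g_0-\tfrac{2}{n}g_0\,\mathrm{div}_{g_0}Y\right)=-\frac{1}{2N_0}\pounds_{g_0,\text{conf}}Y.
\]
Taking $\mathrm{div}_{g_0}$ of both sides yields the claimed formula. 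The only step requiring any care is ensuring that the sign and coefficient collapse correctly in the subtraction $1-\tfrac{n-1}{n}=\tfrac{1}{n}$, which is precisely what produces the conformal Killing operator $\pounds_{g_0,\text{conf}}$; this is the reason why the vacuum reference data with $K_0=\alpha g_0$ is the natural setting for which the linearization takes this clean divergence-of-conformal-Killing-operator form.
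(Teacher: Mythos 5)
Your proposal is correct and follows essentially the same route as the paper: evaluate (\ref{linearisation.0}) at the reference data, note $\pi_0=0$, $d\phi_0=0$, $P_{g_0}=-(n-1)\alpha g_0$ and $2\rho_0-R_{g_0}=\alpha^2 n(n-1)$, and collapse the bracket to $-\tfrac{1}{2N_0}\pounds_{g_0,conf}Y$. The only cosmetic difference is that the paper obtains $2\rho_0-R_{g_0}=(\mathrm{tr}_{g_0}K_0)^2-|K_0|^2_{g_0}$ directly from the Hamiltonian constraint rather than invoking $R_{g_0}=0$, but this is equivalent under the section's standing assumptions.
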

\begin{proof}
First, notice that under our choices, we get that $\dot{g}_0$ is defined by the relation
\begin{align*}
K_0=-\frac{1}{2N_0}\left(\dot{g}_0 - \pounds_{\beta_0}g_0 \right),
\end{align*}
where $N_0$ is a sufficiently smooth strictly positive bounded function satisfying $N_{0}-1\in H_{s+3,\delta+1}$, and $\beta_0\in H_{s+2,\delta}$ is chosen freely. In this situation, we write $\dot{g}_0\doteq -2\alpha N_0 e + \delta\dot{g}_0$, for some $\delta\dot{g}_0$, which is fixed using the above relation by
\begin{align*}
\delta\dot{g}_0\doteq - 2\alpha (N_0-1)e - 2\alpha(g_0-e) - 2\alpha(N_0-1)(g_0-e) + \pounds_{\beta_0}g_0 \in H_{s+1,\delta+1}.
\end{align*}
With this choice we have fixed $\psi_0$ and $\beta_0$ and we can use these choices in (\ref{linearisation.0}) to compute $D_2\Phi_{(\psi_0,\beta_0)}$. With this in mind, first notice that $P_{g_0}=\alpha(1 -n)g_0$, $\pi_0=\phi_0=0$ and, since $(g_0,K_0)$ solve the vacuum constraints, $2\rho_0 - R_{g_0}=2\Lambda - R_{g_0}= (\mathrm{tr}_{g_0}K_0)^2 - |K_0|^2_{g_0}=\alpha^2n(n - 1)$. Thus
\begin{align*}
D_2\Phi_{(\psi_0,\beta_0)}\cdot Y &=\mathrm{div}_{g_0}\left( \frac{1}{N_0}\left( g_0 \mathrm{div}_{g_0}Y - \frac{1}{2}\pounds_{Y}g_0 - \frac{1}{\alpha^2n(n - 1)} \alpha^2(1-n)^2 \langle g_0,\nabla_{g_0} Y\rangle_{g_0} g_0 \right)\right)\\
&=\mathrm{div}_{g_0}\left( \frac{1}{N_0}\left( g_0 \mathrm{div}_{g_0}Y - \frac{1}{2}\pounds_{Y}g_0 - \frac{n-1}{n}  \mathrm{div}_{g_0}Y g_0 \right)\right),\\
&=- \frac{1}{2}\mathrm{div}_{g_0}\left( \frac{1}{N_0}\left(  \pounds_{Y}g_0 - \frac{2}{n}  \mathrm{div}_{g_0}Y g_0 \right)\right),
\end{align*}
which finishes the proof.
\end{proof}
\begin{coro}
The linearisation of the map $\Phi$ defined above, taken at an umbilical solution of the vacuum constraint equations with positive consmological constant $\Lambda$, defines an elliptic operator $D_2\Phi_{(\psi_0,\beta_0)}:\mathcal{E}_2\mapsto\mathcal{Z}$. 
\end{coro}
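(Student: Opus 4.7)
The plan is to compute the principal symbol of $L := D_2\Phi_{(\psi_0,\beta_0)} = -\tfrac{1}{2}\,\mathrm{div}_{g_0}\circ(N_0^{-1}\pounds_{g_0,conf})$ and to verify that it is a linear isomorphism on each cotangent fibre away from zero. Since $N_0$ is a strictly positive bounded scalar, the factor $N_0^{-1}$ contributes only a non-vanishing multiplicative rescaling to the top-order symbol, so the question reduces to the ellipticity of the standard conformal vector Laplacian $\mathcal{L}:=\mathrm{div}_{g_0}\circ\pounds_{g_0,conf}$.

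First I would assemble the symbol from its pieces. For a covector $\xi$ and a vector field $Y$, the principal symbol of the conformal Killing operator acts by
\[
\sigma_{\pounds_{g_0,conf}}(\xi)(Y) \;=\; \xi\otimes Y^{\flat}+Y^{\flat}\otimes\xi-\tfrac{2}{n}\,\xi(Y)\,g_0,
\]
while the symbol of $\mathrm{div}_{g_0}$ sends a symmetric $(0,2)$-tensor $T$ to the $1$-form $T(\xi^{\sharp},\cdot)$. Composing these, I expect to obtain, up to the nowhere-vanishing factor $-1/(2N_0)$,
\[
\sigma_{L}(\xi)(Y)_j \;=\; |\xi|_{g_0}^{2}\,Y_j \,+\, \bigl(1-\tfrac{2}{n}\bigr)\,\xi(Y)\,\xi_j.
\]
Then I would test injectivity at a non-zero $\xi$: if the displayed expression vanishes, pairing with $\xi^{\sharp}$ yields $(2-\tfrac{2}{n})|\xi|_{g_0}^{2}\,\xi(Y)=0$, which for $n\geq 2$ forces $\xi(Y)=0$, after which reinserting gives $|\xi|_{g_0}^{2}\,Y=0$ and hence $Y=0$. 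Since source and target have the same rank, fibrewise injectivity of the symbol upgrades to an isomorphism.

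No step here looks like a serious obstacle: $L$ is a well-known conformal vector Laplacian pre-composed with a smooth positive conformal weight, and the only thing to verify carefully is the numerical coefficient in the composed symbol. The mapping properties $L:\mathcal{E}_2=H_{s+2,\delta}\to\mathcal{Z}=H_{s,\delta+2}$ are already in place from the coefficient analysis of $g_0$ and $N_0$ carried out in the paragraphs preceding the lemma, so combining those with the fibrewise symbol analysis above delivers the corollary directly.
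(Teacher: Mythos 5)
Your proposal is correct and follows the same route as the paper: reduce to the principal part $-\tfrac{1}{2N_0}\Delta_{g_0,conf}$ and invoke the ellipticity of the conformal vector Laplacian, which the paper simply cites as known while you verify it by the (correct) fibrewise symbol computation. The coefficient $1-\tfrac{2}{n}$ and the injectivity argument for $n\geq 2$ check out, so this is just a slightly more self-contained version of the paper's proof.
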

\begin{proof}
From the above lemma, we know that the principal part of $D_2\Phi_{(\psi_0,\beta_0)}$ is given by $-\frac{1}{2N_0}\Delta_{g_0,conf}:\mathcal{E}_2\mapsto\mathcal{Z}$, where  $\Delta_{g_0,conf}Y\doteq \mathrm{div}_{g_0}\left(\pounds_{g_0,conf}Y\right)$. This operator is known to be elliptic, from which we get that so is $D_2\Phi_{(\psi_0,\beta_0)}$.
\end{proof}
\bigskip

In the next lemma we will characterise the kernel of $L\doteq D_2\Phi_{(\psi_0,\beta_0)}$, where the linearisation is taken at an umbilical reference solution of the constraint equations. It is clear the the set of conformal Killing fields of $g_0$ lies in $\mathrm{Ker}(L)$. We will now show that actually $\mathrm{Ker}(L)=\mathrm{Ker}(\Delta_{g_0,conf})$. That is, $\mathrm{Ker}(L)$ coincides with the space of conformal Killing fields of $g_0$.

\begin{lemma}\label{injectivity}
Consider the linearisation of the map $\Phi$ defined above taken at a point $(\psi_0,\beta_0)$ constructed from an umbilical solution of the vacuum constraint equations with positive cosmological constant $\Lambda$, together with choices of lapse and shift satisfying $\beta_0\in \mathcal{E}_2$ and $N_0$ a strictly positive bounded function such that $N_0-1\in H_{s+3,\delta+1}$. Denote this linear map by $L:\mathcal{E}_2\mapsto \mathcal{Z}$. Then, the kernel of $L$ is the space of conformal Killing field of $g_0$.
\end{lemma}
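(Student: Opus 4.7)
\bigskip

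\noindent\textbf{Proof plan.} The containment $\mathrm{Ker}(\Delta_{g_0,conf})\subseteq\mathrm{Ker}(L)$ is immediate from the explicit formula $LY=-\tfrac{1}{2}\mathrm{div}_{g_0}(N_0^{-1}\pounds_{g_0,conf}Y)$, since the conformal Killing equation $\pounds_{g_0,conf}Y=0$ forces the argument of the divergence to vanish. All the content of the lemma is in the reverse inclusion, and the plan is to obtain it by a weighted $L^2$ integration-by-parts identity of the form
\begin{equation*}
0 \;=\; -2\int_M \langle LY,Y\rangle_{g_0}\,\mu_{g_0} \;=\; \tfrac{1}{2}\int_M \frac{1}{N_0}|\pounds_{g_0,conf}Y|_{g_0}^{2}\,\mu_{g_0},
\end{equation*}
from which $\pounds_{g_0,conf}Y=0$ follows because $N_0$ is strictly positive and bounded above and below at infinity (its asymptotic value being $1$, thanks to $N_0-1\in H_{s+3,\delta+1}$ and the Sobolev embedding \ref{sobolevembedding}).

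\medskip

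\noindent To carry this out rigorously, I would proceed in three steps. First, I would verify that $L$ satisfies the regularity hypotheses of the elliptic-regularity/bootstrap Lemma \ref{integrationbyparts}: the principal symbol is $-\tfrac{1}{2N_0}\Delta_{g_0,conf}$ with $g_0-e\in H_{s+3,\delta+1}$ and $N_0^{-1}-1\in H_{s+1,\delta+1}$, and the lower-order coefficients come from derivatives of $g_0$ and $N_0$, so they land in the required weighted spaces. Second, starting from $Y\in\mathcal{E}_2=H_{s+2,\delta}$ with $LY=0\in H_{s,\tilde\delta+2}$ for every $\tilde\delta$, I would iteratively invoke Lemma \ref{integrationbyparts} to upgrade the decay of $Y$: for any $\tilde\delta$ with $\delta\leq\tilde\delta<n/2-2$ one gets $Y\in H_{s+2,\tilde\delta}$, and correspondingly $\pounds_{g_0,conf}Y\in H_{s+1,\tilde\delta+1}$. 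Picking $\tilde\delta>-1$ (which is possible since $n\geq 4$ gives $n/2-2\geq 0$) ensures $N_0^{-1}|\pounds_{g_0,conf}Y|_{g_0}^{2}\in L^1(M,\mu_{g_0})$ in the unweighted sense.

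\medskip

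\noindent Third, I would justify the integration by parts itself on a sequence of exhaustions of $M$ by large balls $B_R$ around the fixed basepoint $p$. Using Sobolev embedding (Lemma \ref{sobolevembedding}) together with the improved decay just obtained, $Y=o(d^{-\tilde\delta-n/2+\epsilon})$ and $\pounds_{g_0,conf}Y=o(d^{-\tilde\delta-1-n/2+\epsilon})$ pointwise, so that the boundary integrand on $\partial B_R$ is bounded by $|Y|\,|\pounds_{g_0,conf}Y|\cdot R^{n-1}=o(R^{-2\tilde\delta-2+2\epsilon+n-1}\cdot R^{1-n})$, which tends to zero as $R\to\infty$ provided $\tilde\delta>-1/2$ (again achievable under our hypotheses, perhaps after one more application of Lemma \ref{integrationbyparts}). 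Passing to the limit $R\to\infty$ delivers the displayed identity, and since $N_0^{-1}>0$ pointwise, $\pounds_{g_0,conf}Y\equiv 0$, i.e.\ $Y\in\mathrm{Ker}(\Delta_{g_0,conf})$.

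\medskip

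\noindent The main obstacle is purely analytic and lies in the weighted bootstrap: one needs to verify that the weight $\tilde\delta$ can be pushed far enough so that both the unweighted $L^2$-integrability of $|\pounds_{g_0,conf}Y|^{2}/N_0$ and the vanishing of the boundary flux hold simultaneously, without ever crossing the indicial weight $n/2-2$ at which Lemma \ref{integrationbyparts} stops giving new information. The hypothesis $n\geq 4$ and the freedom in the original weight $\delta>-n/2$ are exactly what make this window non-empty; in dimension three (excluded here) one would need a separate argument, typically via the weighted Poincar\'e inequality of Lemma \ref{poincare}.
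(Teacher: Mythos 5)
Your proposal is correct and takes essentially the same route as the paper: verify the coefficient regularity so that Lemma \ref{integrationbyparts} upgrades the decay of a kernel element $Y$, then integrate by parts to conclude $\int_M N_0^{-1}|\pounds_{g_0,conf}Y|^2_{g_0}\,\mu_{g_0}=0$ and hence that $Y$ is conformal Killing. The only divergence is technical: you justify the integration by parts via exhaustion by balls and decay of the boundary flux (where your threshold $\tilde\delta>-1/2$ is looser than the actual requirement $\tilde\delta>-1+\epsilon$, which is why you needlessly exclude $n=3$), whereas the paper approximates $Y$ by compactly supported fields and passes to the limit using the weighted pairing of $L^2_{1}$ against $L^2_{-1}$, which covers all $n\geq 3$ once one knows $Y\in H_{s+2,-1}$.
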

\begin{proof}
Consider $Y\in \mathrm{Ker}(L)$. Our first claim is that for $n\geq 3$, it actually holds that $Y\in H_{s+2,-1}$. This is consequence of Lemma \ref{integrationbyparts}, since we can rewrite $L$ in the following way
\begin{align*}
L(Y)_k=\tilde{a}_2{}^{jiu}_k \nabla_j\nabla_i Y_u + \tilde{a}_1{}^{iu}_k\nabla_i Y_u, 
\end{align*}
where
\begin{align*}
{\tilde{a}_2}{}^{jiu}_k&=-\frac{1}{2N_0}\left(g_0^{ji}\delta^u_k + g_0^{ju}\delta^i_k - \frac{2}{n}g_0^{iu}\delta^j_k \right),\\
\tilde{a}_1{}^{iu}_k&=\frac{1}{2N_0^2}\left(g^{ji}\delta^u_k + g^{ju}\delta^i_k - \frac{2}{n}g^{iu}\delta^j_k  \right)\nabla_jN_0=-\frac{1}{N_0}\tilde{a}_2{}^{jiu}_k\nabla_jN_0.
\end{align*}
Also, recalling that $D$ denotes the $e$-covariant derivative, we have that
\begin{align*}
\nabla_iY_u&=D_iY_u - S^l_{iu}Y_l,\\
\nabla_j\nabla_iY_u
&=D_jD_iY_u - \left( S^a_{iu}\delta^b_j + S^b_{ji}\delta^a_u  + S^a_{ju}\delta^b_i\right)D_bY_a + \left(S^a_{ji}S^l_{au} + S^a_{ju}S^l_{ia} - D_jS^l_{iu}\right)Y_l.
\end{align*}
Thus, we get
\begin{align}\label{ellipticsystem1}
L(Y)_k&= a_2{}^{lba}_kD_lD_bY_a + a_1{}^{ba}_k D_bY_a + a_0{}^{a}_k Y_a, 
\end{align}
where
\begin{align*}
a_2{}^{lba}_k&=\tilde{a}_2{}^{lba}_k,\\
a_1{}^{ba}_k&= \tilde{a}_1{}^{ba}_k - \tilde{a}_2{}^{jiu}_k\left( S^a_{iu}\delta^b_j + S^b_{ji}\delta^a_u  + S^a_{ju}\delta^b_i\right),\\
a_0{}^{a}_k&= \tilde{a}_2{}^{jiu}_k\left(S^b_{ji}S^a_{bu} + S^b_{ju}S^a_{ib} - D_jS^a_{iu}\right)   - \tilde{a}_1{}^{iu}_kS^a_{iu}.
\end{align*}
Now, consider the model operator $L_{\infty}Y_k\doteq A_2{}^{lba}_{k}D_lD_bY_a$, where
\begin{align*}
A_2{}^{jiu}_{k}=-\frac{1}{2}\left(e^{ji}\delta^u_k + e^{ju}\delta^i_k - \frac{2}{n}e^{iu}\delta^j_k \right).
\end{align*}
This operator actually defines the operator $L_{\infty}=-\frac{1}{2}\Delta_{e,conf}$, which is clearly an elliptic operator, with $A_2$ smooth and is constant in each end. Furthermore, some computations and the use of the multiplication property show that, under our functional hypotheses, it holds that  
$a_2-A_2\in H_{s+3,\delta+1}$, for $s>\frac{n}{2}$ and $\delta>-\frac{n}{2}$. This implies that we can rewrite
\begin{align*}
\tilde{a}_1{}^{iu}_k
&= -\left(\frac{1}{N_0}-1\right)(\tilde{a}_2{}^{jiu}_k - A_2{}^{jiu}_k)D_jN_0 - (\tilde{a}_2{}^{jiu}_k - A_2{}^{jiu}_k)D_jN_0 - \left(\frac{1}{N_0}-1\right)\tilde{A}_2{}^{jiu}_kD_jN_0 \\
&- \tilde{A}_2{}^{jiu}_kD_jN_0.
\end{align*}
Now, since by hypotheses $N_0-1\in H_{s+3,\delta+1}$, then $DN\in H_{s+2,\delta+2}$. This implies that $(N^{-1}_0-1)DN_0 \in H_{s+2,\delta+3},(\tilde{a}_2-A_2)\otimes DN\in H_{s+2,\delta+3}$ and $(N^{-1}_0-1)(\tilde{a}_2-A_2)\otimes DN\in H_{s+2,\delta+3}$ from the multiplication property, which shows that $\tilde{a}_1\in H_{s+2,\delta+2}$. With this in mind, we can see that
\begin{align*}
a_1{}^{ba}_k&= \tilde{a}_1{}^{ba}_k - (\tilde{a}_2{}^{jiu}_k-A_2{}^{jiu}_k)\left( S^a_{iu}\delta^b_j + S^b_{ji}\delta^a_u  + S^a_{ju}\delta^b_i\right) - A_2{}^{jiu}_k\left( S^a_{iu}\delta^b_j + S^b_{ji}\delta^a_u  + S^a_{ju}\delta^b_i\right).
\end{align*}
From the hypothesis $g-e\in H_{s+3,\delta+1}$, we get $S\in H_{s+2,\delta+2}$, which implies that the third term is in this same space; from the multiplication property we see that the second one is also in $H_{s+2,\delta+2}$ , which, since $\tilde{a}_1\in H_{s+2,\delta+2}$, implies that $a_1\in H_{s+2,\delta+2}$. Similarly, we get
\begin{align*}
a_0{}^{a}_k&= (\tilde{a}_2{}^{jiu}_k - A_2{}^{jiu}_k)\left(S^b_{ji}S^a_{bu} + S^b_{ju}S^a_{ib} - D_jS^a_{iu}\right) + A_2{}^{jiu}_k\left(S^b_{ji}S^a_{bu} + S^b_{ju}S^a_{ib} - D_jS^a_{iu}\right)   \\
&- \tilde{a}_1{}^{iu}_kS^a_{iu}.
\end{align*}
Then, we get that $S\otimes S \in H_{s+2,\delta+4}$; $S\otimes \tilde{a}_1\in H_{s+2,\delta+4}$; also $DS\in H_{s+1,\delta+3}$, implying that the first and second terms are in $H_{s+1,\delta+3}$. This implies that $a_0\in H_{s+1,\delta+3}$. These computations imply that the coefficients in (\ref{ellipticsystem1}) satisfy the hypotheses of Lemma \ref{integrationbyparts}. This shows that for $L:H_{s+2,\delta}\mapsto H_{s,\delta+2}$, if $Y\in \mathrm{Ker}(L)$, then $Y\in H_{s+2,\tilde{\delta}}$ for any $\tilde{\delta}<\frac{n}{2}-2$, which for $n\geq 3$ implies $Y\in H_{s+2,-1}$. 

All of the above was done so that we can justify the following integration by parts formula for any $Y\in \mathrm{Ker}(L)$. Notice that for $Y\in C^{\infty}_0$, it holds that
\begin{align*}
\int_M\langle \mathrm{div}_{g_0}\left(\frac{1}{N_0}\pounds_{g_0,conf}Y \right),Y\rangle_{g_0}\mu_{g_0}&= - \int_M\frac{1}{N_0}\langle \pounds_{g_0,conf}Y ,\nabla Y\rangle_{g_0}\mu_{g_0},
\end{align*}
Now, consider $Y\in \mathrm{Ker}(L)\subset H_{s+2,-1}$, and $\{Y_n\}_{n=1}^{\infty}\subset C^{\infty}_{0}$, such that $Y_n\xrightarrow[n\mapsto\infty]{H_{s+2,-1}} Y$, and notice that
\begin{align*}
\Big|\int_M\frac{1}{N_0}\langle \pounds_{g_0,conf}Y ,\nabla Y\rangle_{g_0}\mu_{g_0} - \int_M\frac{1}{N_0}\langle \pounds_{g_0,conf}Y_n ,\nabla Y_n\rangle_{g_0}\mu_{g_0}\Big|&\leq \int_M \frac{1}{N_0}|\langle \pounds_{g_0,conf}(Y-Y_n) ,\nabla Y\rangle_{g_0}|\mu_{g_0},\\
&+\int_M \frac{1}{N_0}|\langle \pounds_{g_0,conf}Y_n ,\nabla (Y-Y_n)\rangle_{g_0}|\mu_{g_0},\\
&\lesssim ||N^{-1}_0||_{C^0}\Big(||\nabla(Y-Y_n)||_{L^2}||\nabla Y||_{L^2} \\
&+ ||\nabla Y_n||_{L^2}||\nabla (Y-Y_n)||_{L^2}  \Big),
\end{align*}
where the right hand side is well-defined and goes to zero, since $Y\in H_{s+1,-1}$. Which implies that for any $Y\in\mathrm{Ker}(L)$, it holds that
\begin{align*}
\int_M\frac{1}{N_0}\langle \pounds_{g_0,conf}Y ,\nabla Y\rangle_{g_0}\mu_{g_0}=-\lim_{n\mapsto\infty}\int_M\langle \mathrm{div}_{g_0}\left(\frac{1}{N_0}\pounds_{g_0,conf}Y_n \right),Y_n\rangle_{g_0}\mu_{g_0}.
\end{align*}
Again, consider
\begin{align*}
&\Big|\int_M\left( \langle \mathrm{div}_{g_0}\left(\frac{1}{N_0}\pounds_{g_0,conf}Y \right),Y\rangle_{g_0} - \langle \mathrm{div}_{g_0}\left(\frac{1}{N_0}\pounds_{g_0,conf}Y_n \right),Y_n\rangle_{g_0}  \right)\mu_{g_0}\Big|\leq \\
&\int_M|\langle \mathrm{div}_{g_0}\left(\frac{1}{N_0}\pounds_{g_0,conf}(Y-Y_n) \right),Y\rangle_{g_0} |\mu_{g_0} +\int_M|\langle \mathrm{div}_{g_0}\left(\frac{1}{N_0}\pounds_{g_0,conf}Y_n \right),Y-Y_n\rangle_{g_0}|\mu_{g_0}, 
\end{align*}
and notice that $\mathrm{div}_g\left(\frac{1}{N_0}\pounds_{g_0,conf}Y\right)=-\frac{1}{N^2_{0}}\pounds_{g_0,conf}Y(\nabla N_{0},\cdot) + \frac{1}{N_0}\Delta_{g_0,conf}Y$, which shows that
\begin{align*}
|\langle\mathrm{div}_g\left(\frac{1}{N_0}\pounds_{g_0,conf}Y\right),Y\rangle_g|&\lesssim \frac{1}{N^{2}_0}|\nabla Y|_g|DN_0|_g|Y|_g + \frac{1}{N_0}|\Delta_{g_0,conf}Y|_g|Y|_g,\\
&\lesssim \frac{1}{N^{2}_0}|DN_0\otimes \nabla Y|_g|Y|_g + \frac{1}{N_0}|\nabla^2Y|_g|Y|_g.
\end{align*}
Since $DN_0\in H_{s+2,\delta+2}$ and $DY\in H_{s+1,0},$ with $\delta> -\frac{n}{2}$, then the multiplication property gives us that $DN_0\otimes DY\in H_{s+1,1}\subset L^2_{1}$. Thus, since $N^{-1}_0\in C^0$, we get
\begin{align*}
\int_M|\langle \mathrm{div}_{g_0}\left(\frac{1}{N_0}\pounds_{g_0,conf}(Y-Y_n) \right),Y\rangle_{g_0} |\mu_{g_0}&\lesssim ||DN_0\otimes \nabla (Y-Y_n)||_{L^2_1}||Y||_{L^2_{-1}} \\
&+  ||\nabla^2(Y-Y_n)||_{L^2_{1}}||Y||_{L^2_{-1}}\xrightarrow[n\rightarrow\infty]{} 0.
\end{align*}
Similarly, we get that
\begin{align*}
\int_M|\langle \mathrm{div}_{g_0}\left(\frac{1}{N_0}\pounds_{g_0,conf}Y \right),Y-Y_n\rangle_{g_0}|\mu_{g_0}&\lesssim ||DN_0\otimes \nabla Y||_{L^2}||Y-Y_n||_{L^2_{-1}}\\
& + ||\nabla^2Y||_{L^2_{1}}||Y-Y_n||_{L_{-1}}\xrightarrow[n\rightarrow\infty]{} 0,
\end{align*}
which finally shows that for any $Y\in\mathrm{Ker}(L)$, we get
\begin{align*}
\int_M\frac{1}{N_0}\langle \pounds_{g_0,conf}Y ,\nabla Y\rangle_{g_0}\mu_{g_0}=-\int_M\langle \mathrm{div}_{g_0}\left(\frac{1}{N_0}\pounds_{g_0,conf}Y \right),Y\rangle_{g_0}\mu_{g_0}=0.
\end{align*}
Hence, since $\langle \pounds_{g_0,conf}Y ,\nabla Y\rangle_g=\frac{1}{2}\langle\pounds_{g_0,conf}Y ,\pounds_{Y}g_0\rangle_g=\frac{1}{2}\langle\pounds_{g_0,conf}Y ,\pounds_{g_0,conf}Y\rangle_g$, then, we get that if $Y\in\mathrm{Ker}(L)\subset H_{s+2,\delta}$, it holds that
\begin{align}
\int_M\frac{1}{N_0}|\pounds_{g_0,conf}Y|^2_{g_{0}}\mu_{g_{0}}=0,
\end{align}
which implies that $Y$ is a conformal Killing field of $g_0$.
\end{proof}

\bigskip

From the above results, we get that $L$ is semi-Fredholm from the ellipticity property plus Lemma \ref{elliptic}. Furthermore, Using the tools presented in the previous section, it has already been shown, for instance in Theorem 4.6 in \cite{Maxwell2}, that the following theorem holds:

\begin{thm}
Let $(M,g_0)$ be a $H_{s,\rho}$-asymptotically euclidean manifold with $s>\frac{n}{2}$ and $\rho>-\frac{n}{2}$. Then, $\Delta_{g_{0},conf}$ is and isomorphism acting on $H_{s,\delta}$ for any $-\frac{n}{2}<\delta<\frac{n}{2}-2$.
\end{thm}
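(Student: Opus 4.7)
I would split the proof into three standard steps: Fredholmness, injectivity, and surjectivity, following the usual strategy for second-order elliptic operators on weighted spaces of manifolds euclidean at infinity.

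For Fredholmness, I would rewrite $\Delta_{g_0,conf}$ in terms of the $e$-covariant derivative $D$ exactly as in the computation of Lemma \ref{injectivity} (with $N_0\equiv 1$). The resulting system has principal part $a_2 = A_2 + (\text{tail})$, with $A_2$ smooth and constant on each end, defining the model elliptic operator $L_\infty = -\tfrac{1}{2}\Delta_{e,conf}$; the tails $a_2 - A_2$, $a_1$, $a_0$ lie in the weighted spaces required by Theorem \ref{elliptic}, thanks to the AE hypothesis on $g_0$ and the induced control on the Christoffel tensor $S$. Theorem \ref{elliptic} then yields that $\Delta_{g_0,conf}:H_{s,\delta}\to H_{s-2,\delta+2}$ has closed range and finite-dimensional kernel for every $\delta\in(-n/2, n/2-2)$.

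For injectivity, given $Y\in\mathrm{Ker}(\Delta_{g_0,conf})$, Lemma \ref{integrationbyparts} upgrades $Y$ to $H_{s,-1}$ (using $-1 < n/2 - 2$ for $n\geq 3$). The density argument in the proof of Lemma \ref{injectivity}, with $N_0$ replaced by the constant $1$ and with Lemma \ref{poincare} supplying the zero-order $L^2_{-1}$ control, then justifies
\[
0 = \int_M\langle \Delta_{g_0,conf} Y, Y\rangle_{g_0}\,\mu_{g_0} = -\tfrac{1}{2}\int_M |\pounds_{g_0,conf} Y|^2_{g_0}\,\mu_{g_0},
\]
so $Y$ is a conformal Killing field of $g_0$. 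Since $\delta>-n/2$, Lemma \ref{sobolevembedding} gives $|Y|_{g_0}\to 0$ at infinity. The conformal Killing fields of the flat metric on $\mathbb{R}^n$ are polynomials of degree at most two (translations, rotations, dilations, special conformal transformations), none of which decays; matching $Y$ on each end to this finite-dimensional model (using the AE control on $g_0-e$) and then invoking unique continuation for the conformal Killing system forces $Y\equiv 0$.

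For surjectivity I would use constancy of the Fredholm index along the homotopy $g_t=(1-t)e+tg_0$, keeping the metric AE and uniformly elliptic (subdividing the path if necessary), combined with the classical fact that the flat conformal Laplacian $\Delta_{e,conf}$ on $\mathbb{R}^n$ is an isomorphism $H_{s,\delta}\to H_{s-2,\delta+2}$ for $\delta\in(-n/2, n/2-2)$. Index zero at the flat endpoint, together with trivial kernel from the previous step, yields surjectivity. The main obstacle is the rigidity step inside injectivity: although a decaying conformal Killing field on an AE manifold must morally vanish, making this rigorous requires either a careful end-by-end asymptotic matching to the flat conformal group or a global unique-continuation argument for the (non-scalar) conformal Killing system, and the hypothesis $\delta>-n/2$ is used essentially to force the decay that rules out the non-trivial flat-model contributions.
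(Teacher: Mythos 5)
The paper does not actually prove this theorem: it is quoted verbatim from the literature (Theorem 4.6 of \cite{Maxwell2}), so there is no in-paper argument to compare against. Your outline is essentially a reconstruction of the standard proof from that reference: semi-Fredholmness from Theorem \ref{elliptic} after rewriting in the $e$-connection, decay improvement of kernel elements via Lemma \ref{integrationbyparts}, integration by parts to conclude that kernel elements are conformal Killing fields, and then the rigidity statement that an AE metric admits no nontrivial conformal Killing field decaying at infinity. You correctly identify that rigidity step as the crux; it is genuinely nontrivial (the usual route is the conformal Killing transport system, which determines the field from a finite jet at one point, combined with the decay forced by $\delta>-\frac{n}{2}$), and it is precisely the content of the cited result, so leaving it as a sketch is acceptable here.

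The one step I would push back on is surjectivity. Your homotopy $g_t=(1-t)e+tg_0$ reduces the problem to showing that $\Delta_{e,conf}$ is an isomorphism, but you then invoke ``the classical fact'' for the flat conformal Laplacian \emph{on $\mathbb{R}^n$}. The background $(M,e)$ is only euclidean at infinity --- flat outside a compact set, with possibly nontrivial topology and several ends --- so the Fourier/Mellin analysis on $\mathbb{R}^n$ does not apply to the endpoint of your homotopy, and establishing surjectivity for $\Delta_{e,conf}$ on $(M,e)$ is exactly the same problem you started with. The standard (and simpler) route, which the paper itself uses for the operator $L=D_2\Phi_{(\psi_0,\beta_0)}$ in the lemma immediately following this theorem, is duality: the operator is formally self-adjoint, the cokernel of $\Delta_{g_0,conf}:H_{s,\delta}\to H_{s-2,\delta+2}$ is identified with the kernel in the dual weight $-\delta-2$, and since $-\frac{n}{2}<-\delta-2<\frac{n}{2}-2$ whenever $-\frac{n}{2}<\delta<\frac{n}{2}-2$, the injectivity argument you already have applies to the cokernel as well. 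Replacing your index-homotopy step by this duality argument closes the gap and makes the whole proof self-contained modulo the conformal Killing rigidity lemma.
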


The above theorem shows that under our hypotheses on $g_0$, \textit{i.e}, $g_0-e\in H_{s+3,\delta+1}$, with $\delta>-\frac{n}{2}$, $g_0$ does not have any conformal Killing fields in $H_{s+2,\delta}$ for any $-\frac{n}{2}<\delta<\frac{n}{2}-2$. Putting together this statement and the above results, we get the following.

\begin{lemma}
Under the same hypotheses of Lemma \ref{injectivity}, plus the constraint $-\frac{n}{2}<\delta<\frac{n}{2}-2$, we get that $L:\mathcal{E}_2\mapsto \mathcal{Z}$ is an isomorphism.
\end{lemma}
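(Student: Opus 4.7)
The plan is to combine the injectivity of $L$ (which follows from Lemma \ref{injectivity} together with the displayed isomorphism theorem for $\Delta_{g_0,conf}$) with a Fredholm index computation in which $L$ is viewed as a compact perturbation of a Banach space isomorphism. Injectivity plus Fredholm of index zero then forces bijectivity.

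\emph{Injectivity.} Any $Y\in\mathrm{Ker}(L)\subset H_{s+2,\delta}$ is a conformal Killing field of $g_0$ by Lemma \ref{injectivity}, so in particular $\Delta_{g_0,conf}Y=\mathrm{div}_{g_0}(\pounds_{g_0,conf}Y)=0$. The displayed theorem asserts that $\Delta_{g_0,conf}:H_{s+2,\delta}\to H_{s,\delta+2}$ is an isomorphism for $-\tfrac{n}{2}<\delta<\tfrac{n}{2}-2$, forcing $Y=0$.

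\emph{Reduction to a compact perturbation.} Using the product rule for the divergence of a scalar times a tensor,
\[
L Y \;=\; -\tfrac{1}{2 N_0}\,\Delta_{g_0,conf} Y \;+\; \tfrac{1}{2 N_0^{2}}\,(\pounds_{g_0,conf}Y)(\nabla N_0,\cdot).
\]
The first summand is a Banach space isomorphism $\mathcal{E}_2\to\mathcal{Z}$: $\Delta_{g_0,conf}$ is an isomorphism between these spaces by the quoted theorem, and multiplication by $1/(2N_0)$ is a bounded automorphism of $\mathcal{Z}$ because $N_0$ is bounded strictly away from zero with $N_0-1\in H_{s+3,\delta+1}$. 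Call the second summand $R$; it is a first-order operator whose coefficient $\nabla N_0/N_0^{2}$ lies in $H_{s+2,\delta+2}$ by the multiplication property. A further application of the multiplication property shows that $R$ maps $\mathcal{E}_2$ boundedly into $H_{s+1,\delta+3}$, and the Rellich-type compact embedding $H_{s+1,\delta+3}\hookrightarrow H_{s,\delta+2}=\mathcal{Z}$ on AE manifolds (strict gain in both derivatives and decay) shows that $R:\mathcal{E}_2\to\mathcal{Z}$ is compact.

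\emph{Conclusion.} Thus $L$ is a compact perturbation of a Banach space isomorphism, hence Fredholm of index zero; combined with injectivity this gives surjectivity, so $L:\mathcal{E}_2\to\mathcal{Z}$ is an isomorphism. The only delicate step is the compactness of $R$, which rests on the extra decay $\nabla N_0/N_0^2\in H_{s+2,\delta+2}$; this is precisely what the hypothesis $N_0-1\in H_{s+3,\delta+1}$ supplies, so there is no obstacle beyond carefully bookkeeping the weighted Sobolev indices.
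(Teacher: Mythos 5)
Your proof is correct, but it reaches surjectivity by a genuinely different mechanism than the paper's. You decompose $L=-\tfrac{1}{2N_0}\Delta_{g_0,conf}+R$, note that the first term is an isomorphism $\mathcal{E}_2\to\mathcal{Z}$ (Maxwell's isomorphism theorem for $\Delta_{g_0,conf}$ in the range $-\tfrac{n}{2}<\delta<\tfrac{n}{2}-2$, composed with the bounded automorphism of $\mathcal{Z}$ given by multiplication by $1/(2N_0)$), and that the first-order remainder $R$ gains one derivative and one order of decay, hence is compact via the Rellich-type embedding $H_{s+1,\delta+3}\hookrightarrow H_{s,\delta+2}$; stability of the Fredholm index then gives index zero, and injectivity (which you correctly extract from Lemma \ref{injectivity} plus the same isomorphism theorem) yields bijectivity. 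The paper instead proves surjectivity by duality: it verifies through an integration-by-parts computation that $L$ is formally self-adjoint and then invokes the result (Theorem 3.7 and Corollary 3.8 of Appendix II in \cite{CB2}) that, in this weight range, an injective, semi-Fredholm, formally self-adjoint elliptic operator of this class is surjective, the point being that the cokernel is identified with the kernel of $L^{*}=L$ in a dual weighted space where injectivity has already been established. Your route avoids the self-adjointness computation and is more robust (it would survive non-symmetric lower-order perturbations), but it imports two facts not stated in the paper: the compactness of the inclusion $H_{s',\delta'}\subset H_{s,\delta}$ for $s'>s$ and $\delta'>\delta$ on manifolds euclidean at infinity (standard, cf. Choquet-Bruhat--Christodoulou), and the weighted multiplication bookkeeping showing $\nabla N_0/N_0^{2}\in H_{s+2,\delta+2}$ and $R(\mathcal{E}_2)\subset H_{s+1,\delta+3}$; both check out under the standing hypotheses $N_0-1\in H_{s+3,\delta+1}$, $g_0-e\in H_{s+3,\delta+1}$, $\delta>-\tfrac{n}{2}$. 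The paper's route, conversely, leans on duality theory for weighted spaces but depends on the specific symmetric structure of $L$.
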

\begin{proof}
From the above discussion, we know that $L:H_{s+2,\delta}\mapsto H_{s,\delta+2}$ is injective and semi-Fredholm. Surjectivity can be extracted by analysing its formal adjoint $L^{*}$. In particular, under our constraints on $\delta$, if $L^{*}=L$, then $L$ is surjective (see, for instance, Theorem 3.7 and Coroally 3.8 in Appendix II in \cite{CB2}). Notice that if $X,Y\in C^{\infty}_{0}$, then it holds that
\begin{align*}
\int_M\langle L(X),Y\rangle_{g_0}\mu_{g_0}&=- \frac{1}{2}\int_M\langle \mathrm{div}_{g_0}\left( \frac{1}{N_0}\left(  \pounds_{X}g_0 - \frac{2}{n}  \mathrm{div}_{g_0}X g_0 \right)\right),Y\rangle_{g_0}\mu_{g_0},\\
&= \frac{1}{2}\int_M\frac{1}{N_0}\langle \pounds_{X}g_0 - \frac{2}{n}  \mathrm{div}_{g_0}X g_0 ,\nabla Y\rangle_{g_0}\mu_{g_0},\\
&= \frac{1}{2}\int_M\frac{1}{2}\frac{1}{N_0}\langle \pounds_{X}g_0 - \frac{2}{n}  \mathrm{div}_{g_0}X g_0 ,\pounds_{Y}g_0- \frac{2}{n}  \mathrm{div}_{g_0}Y g_0\rangle_{g_0}\mu_{g_0},\\
&= \frac{1}{2}\int_M \frac{1}{N_0}\langle \nabla X ,\pounds_{Y}g_0- \frac{2}{n}  \mathrm{div}_{g_0}Y g_0\rangle_{g_0}\mu_{g_0},\\
&=- \frac{1}{2}\int_M \langle X ,\mathrm{div}_{g_{0}}\left(\frac{1}{N_0}\left( \pounds_{Y}g_0- \frac{2}{n}  \mathrm{div}_{g_0}Y g_0\right)\right) \rangle_{g_0}\mu_{g_0},\\
&=\int_M \langle X ,L(Y) \rangle_{g_0}\mu_{g_0},
\end{align*}
which shows that $L^{*}=L$, and thus the isomorphism property follows.
\end{proof}
\bigskip

Applying the above lemma to $D_2\Phi_{(\psi_0,\beta_0)}$, followed by the implicit function theorem, we prove

\begin{thm}\label{thmTS1}
Given a solution  $(g_0,K_0)$ of the constraint equations fixed as above, there exists an $\mathcal{E}_1$-neighbourhood of the initial data $\psi_0\in \mathcal{E}_1$, such that the reduced thin-sandwich equations given by {\rm(}\ref{redeq2}{\rm)} are locally well-posed, \textit{i.e}, there is a unique solution $\beta=\beta(\psi)\in H_{s+2,\delta}$, $s>\frac{n}{2}$ and $-\frac{n}{2}<\delta<\frac{n}{2}-2$, for each $\psi$ sufficiently close to $\psi_0$.
\end{thm}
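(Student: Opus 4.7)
The plan is to prove Theorem \ref{thmTS1} as a direct application of the Banach-space Implicit Function Theorem to the non-linear map $\Phi:U\subset\mathcal{E}_1\times\mathcal{E}_2\to\mathcal{Z}$ defined in (\ref{redeq2}). Three hypotheses must be checked at the reference point $(\psi_0,\beta_0)$: (i) $\Phi(\psi_0,\beta_0)=0$; (ii) $\Phi$ is $C^1$ on a neighbourhood $U$ of this point; and (iii) the partial derivative $D_2\Phi_{(\psi_0,\beta_0)}:\mathcal{E}_2\to\mathcal{Z}$ is a topological isomorphism. Once these are in place, the IFT yields a neighbourhood of $\psi_0$ in $\mathcal{E}_1$ and a $C^1$ map $\psi\mapsto\beta(\psi)\in\mathcal{E}_2$ with $\beta(\psi_0)=\beta_0$ such that $\Phi(\psi,\beta(\psi))=0$, which is exactly the statement of local well-posedness.

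For (i), the vanishing $\Phi(\psi_0,\beta_0)=0$ is the momentum constraint for the umbilical reference solution $(g_0,K_0)$ with $K_0=\alpha g_0$, $\epsilon_0=0$, $S_0=0$, and $\Lambda>0$ chosen so that $\alpha^2=\frac{2\Lambda}{n(n-1)}$; this was verified at the start of Section 4, where $\dot g_0$ was constructed so that (\ref{extcurv}) holds with lapse $N_0$ and shift $\beta_0$. For (ii), the $C^1$-regularity on a small enough $U$ around $(\psi_0,\beta_0)$ follows from the explicit form of $\Phi$ combined with the multiplication property and Sobolev embedding for weighted spaces (Lemma \ref{sobolevembedding} and the subsequent multiplication lemma): all compositions occurring inside the square root and inside the divergence stay away from their singular locus on a small neighbourhood (as explained right before the definition of $\Phi$), and products and quotients of elements in the relevant weighted spaces land in the target space $\mathcal{Z}=H_{s,\delta+2}$ with continuous Fr\'echet derivatives.

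For (iii), which is the analytic core, the work has already been done: the computational lemma identifies $D_2\Phi_{(\psi_0,\beta_0)}$ with $-\tfrac12\mathrm{div}_{g_0}\bigl(N_0^{-1}\pounds_{g_0,conf}\,\cdot\,\bigr)$, the corollary shows it is elliptic, Lemma \ref{injectivity} identifies its kernel on $\mathcal{E}_2=H_{s+2,\delta}$ with the space of conformal Killing fields of $g_0$, and the isomorphism lemma immediately preceding the theorem combines the semi-Fredholm property from Theorem \ref{elliptic}, the non-existence of conformal Killing fields of $g_0$ in $H_{s+2,\delta}$ for $-\tfrac{n}{2}<\delta<\tfrac{n}{2}-2$ (Maxwell's isomorphism theorem for $\Delta_{g_0,conf}$), and the formal self-adjointness $L^{*}=L$ to upgrade injectivity to an isomorphism. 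Thus (iii) is in hand at exactly the same values of $\delta$ asserted in the theorem.

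The only step that requires genuine care, rather than quoting previous results, is verifying the $C^1$-regularity of $\Phi$ on $U$; this is where one must be a little careful about the interplay of the weights $\delta+1$, $\delta+2$ and the index $s>\frac{n}{2}$, since the expression (\ref{redeq2}) contains products of several factors with different decay rates and a denominator that must be kept away from zero. Everything else is either already contained in the structural lemmas of Section 3 or established in the lemmas of the present section, so the conclusion of Theorem \ref{thmTS1} follows at once from the IFT with $\beta(\psi)\in H_{s+2,\delta}$ depending continuously (indeed $C^1$) on $\psi\in\mathcal{E}_1$ near $\psi_0$.
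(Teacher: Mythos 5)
Your proposal is correct and follows exactly the paper's route: the paper proves Theorem \ref{thmTS1} by applying the isomorphism lemma for $L=D_2\Phi_{(\psi_0,\beta_0)}$ together with the implicit function theorem, with the $C^1$-regularity of $\Phi$ having been set up earlier via the multiplication and embedding properties of the weighted spaces. Your more explicit listing of the three IFT hypotheses is just a fuller write-up of the same argument.
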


This theorem implies that  for freely chosen initial data in a neighborhood of $\psi_0$, the thin-sandwich problem is well-posed and the constraint equations (\ref{hamiltonian})-(\ref{momentum}) can be solved in terms of the lapse and shift. Note that in this way we can find solutions with small scalar and electromagnetic fields, not constrained to constant mean curvature, although the mean curvature would  be \textit{close} to constant.

\subsection*{Existence of reference solutions}

In this section we establish conditions under which a manifold euclidean at infinity admits a reference solution of the constraint equations of the form proposed in the previous section, that is, a solution of the vacuum constraint equations with a positive cosmological constant, such that $K=\alpha g$. Whenever this is possible, we can guarantee that there is an open subset in the space of solutions of the constraint equations where the thin-sandwich problem is locally well-posed. It is important to stress that, unlike the compact case, we will find obstructions to the existence of such reference solutions. Such obstructions arise since, in this setting, the momentum constraint is automatically satisfied for solutions of the vacuum constraint equations with positive cosmological constant $\Lambda$ and $\alpha^2=\frac{2\Lambda}{n(n-1)}$, and the Hamiltonian constraint becomes
\begin{align}\label{prescurv0}
R_{\bar{g}}=0.
\end{align}

It is a known fact that on AE-manifolds prescribing zero scalar curvature is non-trivial and poses topological obstructions. In order to understand some of the counterexamples, we consider the Yamabe characterization made in \cite{maxwell-dilts}. There, the Yamabe invariant is generalized to asymptotically euclidean (AE, for short) manifolds and (in Theorem 5.1) it is shown that an $W^p_{2,\delta}$-AE manifold $(M,g)$, with $p>\frac{n}{2}$ and $-\frac{n}{p}<\delta<n-\frac{n}{p}-2$, is Yamabe positive if and only if every non-positive scalar curvature function $R\in L^p_{\delta+2}$ can be attained by a metric conformally equivalent to $g$. Then, the authors relate the Yamabe classification for such AE manifolds with the Yamabe classification of their conformal compactifications (see Proposition 5.4 in \cite{maxwell-dilts}). They conclude that such AE manifolds and their conformal compactifications have the same Yamabe type. This yields a wide variety of AE manifolds which do not admit a metric with zero scalar curvature. For instance, consider a $H_{\tilde{s}+2,\delta}$-AE manifold $(M,g)$, with $\tilde{s}>\frac{n}{2}$ and $-\frac{n}{2}<\delta<\frac{n}{2}-2$. We then have the  continuous embedding $H_{\tilde{s},\delta}\subset W^p_{2,\delta_p}$, with $p=\frac{2n}{n-2\tilde{s}+4}$, $\delta_p=\delta+\tilde{s}-2$ and $\tilde{s}<\frac{n}{2}+2$. It is a matter of computations to see that, under these conditions, $p>\frac{n}{2}$ and $-\frac{n}{p}<\delta_p<n-\frac{n}{p}-2$. Thus, we are under the hypotheses of Theorem 5.1 and Proposition 5.4 of \cite{maxwell-dilts}. Then, if $(M,g)$ does not admit some Yamabe positive conformal compactification, then $R_g$ has to be negative somewhere. This is a consequence of the fact that such a $H_{\tilde{s}+2,\delta}$-asymptotically flat metric with non-negative scalar curvature can be conformally transformed to zero scalar curvature, and that a metric with zero scalar curvature is Yamabe positive (see \cite{maxwell-dilts}). Thus, for instance, using the results established in \cite{schoen-yau}, AE manifolds which are obtained by removing a finite number of points from $M^n\#T^n$, with $M$ closed, do not admit zero scalar curvature, and thus, no reference solutions for the constraint equations of the type we are proposing exist. Furthermore, some even simpler counterexamples exist, such as AE manifolds resulting from the removal of points from the torus $T^n$.

Taking into account the above discussion, we see that the most general statement we can make is the following one.

\begin{thm}\label{thmTS2}
On any $n$-dimensional manifold euclidean at infinity, $n\geq 4$, which admits a $H_{s+3,\delta+1}$-Yamabe positive metric, with $s>\frac{n}{2}$ and $-\frac{n}{2}<\delta<\frac{n}{2}-3$, there is an open subset in the space of solutions of the Einstein constraint equations where the thin sandwich problem is locally well-posed.
\end{thm}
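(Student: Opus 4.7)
The plan is to reduce Theorem \ref{thmTS2} to Theorem \ref{thmTS1} by exhibiting a single reference solution $(g_0,K_0)$ of the vacuum constraint equations with positive cosmological constant, lying in the exact functional class singled out in the preceding section. Once such a reference exists, the $\mathcal{E}_1$-neighbourhood of the associated data $\psi_0$ furnished by Theorem \ref{thmTS1} supplies the desired open set of solutions, which is everything the theorem asserts.

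First I would fix some $\Lambda>0$, set $\alpha^2=\frac{2\Lambda}{n(n-1)}$, and aim for $(g_0,K_0)=(g_0,\alpha g_0)$ with $g_0-e\in H_{s+3,\delta+1}$ and $R_{g_0}=0$. Under these conditions the momentum constraint (\ref{momentum}) is automatic, since $K_0$ is a constant multiple of $g_0$, and the Hamiltonian constraint (\ref{hamiltonian}) reduces to (\ref{prescurv0}) by the choice of $\alpha$. So the entire question boils down to producing a scalar-flat metric asymptotic to the euclidean one in the weighted class required by Theorem \ref{thmTS1}.

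Second, I would construct such $g_0$ from the hypothesised $H_{s+3,\delta+1}$-Yamabe positive metric $g$ by a conformal deformation. Using the embedding between the $H_{s,\delta}$-scale and the $W^{p}_{2,\delta_p}$-scale recorded in the discussion preceding the statement, the Yamabe type of $g$ in the $H$-scale coincides with that of its image in the Maxwell--Dilts framework. Theorem 5.1 of \cite{maxwell-dilts} then produces a positive conformal factor $\phi$, with $\phi-1$ in a suitable weighted Sobolev space, such that $g_0:=\phi^{\frac{4}{n-2}}g$ is scalar flat. An elliptic bootstrap on the conformal Laplacian equation satisfied by $\phi$, combined with the multiplication property of weighted Sobolev spaces, upgrades the regularity of $\phi$ until $g_0-e\in H_{s+3,\delta+1}$. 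The numerical restrictions $n\geq 4$ and $-\frac{n}{2}<\delta<\frac{n}{2}-3$ are precisely those needed so that this embedding-and-bootstrap procedure closes up: the constraint $\delta+1<\frac{n}{2}-2$ (equivalent to $\delta<\frac{n}{2}-3$) is what the Maxwell--Dilts Yamabe characterisation requires when applied to a metric at weight $\delta+1$, and $n\geq 4$ is what keeps the admissible range non-empty.

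Third, once $(g_0,K_0)$ is in hand, I would pick $N_0\equiv 1$ and $\beta_0=0$ (any admissible choice would do), form the corresponding $\psi_0\in\mathcal{E}_1$, and apply Theorem \ref{thmTS1} directly: since $-\frac{n}{2}<\delta<\frac{n}{2}-3$ implies $-\frac{n}{2}<\delta<\frac{n}{2}-2$, the hypotheses of that theorem are met, and the open $\mathcal{E}_1$-neighbourhood of $\psi_0$ it provides is the open set asserted in Theorem \ref{thmTS2}. The main obstacle is the second step: translating the Maxwell--Dilts Yamabe theorem, which is stated in $W^{p}_{2,\delta_p}$-spaces, into the sharper $H_{s+3,\delta+1}$-scale in which the reference-solution framework is posed, and in doing so guaranteeing that the conformal factor inherits enough regularity and decay to produce $g_0-e$ genuinely in $H_{s+3,\delta+1}$ rather than only in some larger intermediate space. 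All the numerical restrictions in the statement, including the exclusion of $n=3$, are driven by this compatibility; once it is carried out, the rest of the proof is an essentially mechanical combination of already-established results.
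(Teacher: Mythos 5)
Your proposal follows essentially the same route as the paper: both reduce the theorem to Theorem \ref{thmTS1} by producing an umbilical scalar-flat reference solution, obtained by conformally deforming the given Yamabe positive metric to zero scalar curvature via the Maxwell--Dilts characterisation (Theorem 5.1 and Proposition 5.4 of that reference), with the embedding $H_{\tilde{s},\delta}\subset W^p_{2,\delta_p}$ used to transfer between the two functional scales. Your identification of where the restrictions $n\geq 4$ and $\delta<\frac{n}{2}-3$ come from (so that the weight $\delta+1$ of the reference metric sits in the admissible window $(-\frac{n}{2},\frac{n}{2}-2)$) matches the paper's setup, and your explicit mention of the regularity bootstrap for the conformal factor only makes more precise a step the paper leaves implicit.
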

 
We now intend to state more explicit conditions which guarantee that an AE manifolds carries a metric with zero scalar curvature. The idea is that the kind of condition we will present will allow us to exhibit relevant examples. Thus, consider the following proposition.

\begin{prop}\label{poincare}
Suppose that $(M^n,g)$ is a $H_{s+1,\delta}$-asymptotically euclidean manifold, with $n\geq 3$, $s> \frac{n}{2}$ and $-1\leq \delta< \frac{n}{2} - 2$. Let $f\in H_{s,\delta+2}$ be a function on $M$ such that its negative part $f_{-}$ satisfies 
\begin{equation}
||f_{-}||_{C^0_{2}}<\frac{1}{C^2_g},
\end{equation} 
where $C_g$ is lowest vale for the Poincar\'e constant {\rm(}which makes the Poincar\'e inequality hold{\rm)}. Then, the equation
\begin{align}
\Delta_gu-fu=0
\end{align}
admits a unique positive solution such that $u-1\in H_{s+2,\delta}$.
\end{prop}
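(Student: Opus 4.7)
The plan is to reduce the problem to a linear elliptic isomorphism followed by a maximum-principle argument. Setting $u = 1 + v$ converts the equation into
\begin{equation*}
Lv \doteq \Delta_g v - f v = f,
\end{equation*}
and the goal becomes to prove that $L \colon H_{s+2,\delta} \to H_{s,\delta+2}$ is a topological isomorphism. Boundedness follows from the usual elliptic estimate for $\Delta_g$ on AE manifolds together with the multiplication property (the indices check: $s > n/2$ and $\delta > -n/2$ suffice for $fv \in H_{s,\delta+2}$ when $f \in H_{s,\delta+2}$ and $v \in H_{s+2,\delta}$). The principal part of $L$ agrees with $\Delta_e$ on each end, the remaining coefficients satisfy the regularity demanded by Theorem \ref{elliptic}, and the weight range $-n/2 < \delta < n/2 - 2$ contains the hypothesis $-1 \leq \delta < n/2 - 2$; hence $L$ is semi-Fredholm.

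For injectivity, take $v \in \ker L$. Lemma \ref{integrationbyparts} upgrades the decay of $v$ to $H_{s+2,\tilde\delta}$ for any $\tilde\delta < n/2 - 2$, so in particular $v \in H_{1,-1}$, and approximation by compactly supported functions justifies
\begin{equation*}
\int_M |\nabla v|_g^2\,\mu_g + \int_M f\, v^2\,\mu_g = 0.
\end{equation*}
Splitting $f = f_+ - f_-$ and using $|f_-| \leq \|f_-\|_{C^0_2}(1+d^2)^{-1}$, the weighted Poincar\'e inequality gives
\begin{equation*}
\int_M f_- v^2\,\mu_g \leq \|f_-\|_{C^0_2}\,\|v\|_{L^2_{-1}}^2 \leq \|f_-\|_{C^0_2}\,C_g^2\,\|\nabla v\|_{L^2}^2,
\end{equation*}
so $\bigl(1 - \|f_-\|_{C^0_2}\,C_g^2\bigr)\|\nabla v\|_{L^2}^2 + \int_M f_+ v^2\,\mu_g \leq 0$. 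The smallness hypothesis makes both terms nonnegative, which forces $\nabla v = 0$; since constants are not in $L^2_{-1}$ for $n \geq 3$, $v \equiv 0$. Surjectivity is obtained as in the preceding isomorphism lemma: $L$ is formally self-adjoint for the $L^2(\mu_g)$ pairing because $\int X(\Delta_g Y - fY)\,\mu_g = \int(\Delta_g X - fX)\,Y\,\mu_g$ for $X,Y \in C^\infty_0$, so injectivity together with closed range and Theorem~3.7 and Corollary~3.8 of Appendix~II in \cite{CB2} upgrades $L$ to an isomorphism and produces a unique $v \in H_{s+2,\delta}$ with $Lv = f$.

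It remains to prove that $u = 1 + v > 0$. The embedding $H_{s+2,\delta} \subset C^0_{\delta'}$ with some $\delta' > 0$ yields $v \to 0$ at infinity, hence $u > 1/2$ outside a large compact set. To exclude nonpositivity on the remaining region, set $u_- \doteq \max(-u,0)$; then $u_- \leq |v|$ where it is nonzero and $\nabla u_- = -\mathbf{1}_{\{u<0\}}\nabla v$, so $u_- \in H_{1,-1}$. Testing $\Delta_g u - fu = 0$ against $u_-$ (with the same limiting procedure as above) produces
\begin{equation*}
\int_M |\nabla u_-|_g^2\,\mu_g + \int_M f\,u_-^2\,\mu_g = 0,
\end{equation*}
and the Poincar\'e/coercivity estimate repeats verbatim to force $u_- \equiv 0$, i.e.\ $u > 0$. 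The main obstacle is precisely this last step: the quantitative interplay between $\|f_-\|_{C^0_2}$ and $C_g$ encoded in the smallness hypothesis is exactly what is needed to absorb the indefinite sign of $f$; without it one cannot rule out zeros of $u$ in the region where $f < 0$.
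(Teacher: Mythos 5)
Your argument for existence and uniqueness is essentially the paper's: the same substitution $u=1+v$, and the same key estimate bounding $\int_M f_-\,v^2\,\mu_g$ by $C_g^2\,\|f_-\|_{C^0_2}\,\|\nabla v\|_{L^2}^2$ via the weighted Poincar\'e inequality; the paper packages the conclusion as a Brill--Cantor--Maxwell positivity condition implying that $L$ is an isomorphism, while you reach the identical endpoint through injectivity plus semi-Fredholmness plus formal self-adjointness, exactly as in the preceding isomorphism lemma. The only real divergence is positivity: the paper simply cites the weak Harnack inequality, whereas your truncation argument with $u_-=\max(-u,0)$ is a nice self-contained way to get $u\geq 0$, but the final step to strict positivity is not automatic --- you still need the strong maximum principle (or weak Harnack for nonnegative supersolutions of $\Delta_g u - f_+u\leq 0$) to rule out interior zeros, concluding via connectedness and $u\to 1$ at infinity that $u$ cannot vanish anywhere. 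That last step should be stated; with it, the proof is complete.
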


\begin{proof}
We search for solutions of the form $u=1+\phi$, with $\phi \in H_{s+2,\delta}$ satisfying
\begin{align}\label{eq2}
\Delta_g\phi-f\phi=f.
\end{align}
The operator $L=\Delta_g-f:H_{s+2,\delta}\to H_{s,\delta+2}$ is continuous, elliptic and formally self-adjoint \cite{CB1}. Furthermore, if $\phi\in {\rm Ker} L$ then 
\begin{align*}
\int_M\big(|D\phi|^2+f\phi^2\big)\mu_g=0.
\end{align*}
Thus, if the following Brill-Cantor-Maxwell-type condition
\begin{align}\label{BCM}
\int_M\big(|D\phi|^2+f\phi^2\big)\mu_g>0 \quad \mbox{ for all } \quad \phi\in H_{s+2,\delta}, \:\: \phi\not\equiv 0,
\end{align}
holds, then $L$ is an isomorphism between our chosen functional spaces, and (\ref{eq2}) admits a unique solution in $H_{s+2,\delta}$. Now, it is clear that
\begin{align*}
\int_M\big(|D\phi|^2+f\phi^2\big)\mu_g\geq\int_M\big(|D\phi|^2+f_{-}\phi^2\big)\mu_g=\int_M\big(|D\phi|^2-|f_{-}|\phi^2\big)\mu_g.
\end{align*}
However, since $\phi\in H_{s+2,-1}$ by hypotheses, in particular we have that $\phi\in L^2_{-1}$, which is equivalent to $|\phi|^2(1+d^2)^{-1}\in L^1$. Hence,
\begin{align*}
\int_M|f_{-}|\phi^2\mu_g&=\int_M|f_{-}|(1+d^2)|\phi|^2(1+d^2)^{-1}\mu_g\leq \Big(\sup_{x\in M}(1+d^2)|f_{-}|\Big)||\phi||^2_{L^2_{-1}}.
\end{align*}
Our assumptions also imply that $H_{s,\delta+2}\subset C^0_2$. Therefore, $f_- \in C^0_2$ and 
\begin{align*}
\left|\int_Mf_{-}\phi^2\mu_g\right|&\leq ||f_{-}||_{C^{0}_2}||\phi||^2_{L^2_{-1}}\leq C^2_g||f_{-}||_{C^{0}_2}||D\phi||^2_{L^2},
\end{align*}
where  we have used the variant of the Poincar\'e inequality valid in this context (see \cite{maxwell-dilts} for a proof). Using this information in the Brill-Cantor-Maxwell condition, one obtains
\begin{align*}
\int_M\big(|D\phi|^2+f\phi^2\big)\mu_g\geq\big(1-C_g^2||f_{-}||_{C^{0}_2}\big)||D\phi||^2_{L^2}, \quad \mbox{ for all } \quad \phi\in H_{s+2,\delta}.
\end{align*}
Thus, if $||f_{-}||_{C^{0}_2}<1/C^{2}_g$, then (\ref{BCM}) holds, and thus there is unique solution of (\ref{eq2}) in $H_{s+2,\delta}$. The positivity of the solution is a standard consequence of the weak Harnack inequality \cite{Trudinger} (see, for instance, the proof of theorem 11.3, chapter VII, in \cite{CB2}).
\end{proof}

\begin{coro}\label{prescurv1}
Suppose that $(M,e)$ is a manifold euclidean at infinity, which admits a $H_{s+2,\delta}$-asymptotically euclidean metric $g$, with $n\geq 3$, $s> \frac{n}{2}$ and $-1\leq\delta< \frac{n}{2} - 2$, such that $||(R_g)_{-}||_{C^0_{2}}<4\frac{n-1}{n-2}\frac{1}{C^2_g}$, with $C_g$ the best Poincar\'e constant for $g$. Then, there exists a reference solution for the vacuum constraint equations with positive cosmological constant $\Lambda$ on $M$ of the form $(\bar{g},\bar{K}=\alpha \bar{g})$, with $\alpha$ a positive constant and $\bar{g}$ a $H_{s+2,\delta}$-asymptotically euclidean metric.
\end{coro}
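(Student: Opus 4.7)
The strategy is to conformally rescale the given AE metric $g$ to one of vanishing scalar curvature, and then set $\bar{K} = \alpha \bar{g}$ with $\alpha = \sqrt{2\Lambda/(n(n-1))}$. Under umbilicity the momentum constraint (\ref{momentum}) is automatic, while the Hamiltonian constraint reduces to (\ref{prescribedcurv}), which for this choice of $\alpha$ is precisely $R_{\bar{g}} = 0$. Thus the corollary is reduced to prescribing zero scalar curvature within the conformal class of $g$.

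In dimension $n \geq 3$ this becomes a linear problem. Writing $\bar{g} = u^{4/(n-2)} g$, the conformal covariance of $R$ shows $R_{\bar g}=0$ is equivalent to
\[
\Delta_g u - \frac{n-2}{4(n-1)} R_g\, u = 0.
\]
Since $g - e \in H_{s+2,\delta}$, two derivatives place $R_g \in H_{s,\delta+2}$, so with $f \doteq \frac{n-2}{4(n-1)} R_g$ we have $f \in H_{s,\delta+2}$ and $f_- = \frac{n-2}{4(n-1)}(R_g)_-$. The hypothesis $\|(R_g)_-\|_{C^0_2} < \frac{4(n-1)}{n-2}\frac{1}{C_g^2}$ translates exactly to $\|f_-\|_{C^0_2} < 1/C_g^2$, and Proposition \ref{poincare} then yields a unique positive solution with $u - 1 \in H_{s+2,\delta}$.

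It remains to verify that $\bar g = u^{4/(n-2)} g$ is $H_{s+2,\delta}$-AE. The embedding $H_{s+2,\delta} \hookrightarrow C^0$ (under $s > n/2$, $\delta > -n/2$) makes $u$ bounded and bounded away from zero, so one can write $u^{4/(n-2)} - 1 = (u-1)\,h(u)$ with $h$ smooth and bounded; the weighted multiplication property then gives $u^{4/(n-2)} - 1 \in H_{s+2,\delta}$. From
\[
\bar g - e = \bigl(u^{4/(n-2)} - 1\bigr) e + u^{4/(n-2)}(g - e)
\]
and another application of the multiplication property we conclude $\bar g - e \in H_{s+2,\delta}$. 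Setting $\bar K = \alpha \bar g$ finishes the construction: the Hamiltonian constraint reads $0 + \alpha^2 n(n-1) = 2\Lambda$, and the momentum constraint holds by constancy of $\mathrm{tr}_{\bar g} \bar K$.

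The entire substantive input is the solvability of the linear Yamabe-type equation, which is precisely the content of Proposition \ref{poincare}; the constant $\tfrac{n-2}{4(n-1)}$ in the conformal law for $R$ is exactly what accounts for the numerical factor between the hypothesis on $(R_g)_-$ and the proposition's condition on $f_-$. The verification that the conformal rescaling preserves the AE class is routine given the $C^0$ control on $u$ and the weighted multiplication property, so no genuinely new obstacle arises beyond what is already handled in the previous proposition.
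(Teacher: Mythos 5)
Your proposal is correct and follows essentially the same route as the paper: reduce to $R_{\bar g}=0$ via the choice $\alpha^2=2\Lambda/(n(n-1))$, conformally rescale, and apply Proposition \ref{poincare} to the resulting linear equation, with the constant $c_n=\tfrac{n-2}{4(n-1)}$ accounting for the factor $4\tfrac{n-1}{n-2}$ in the hypothesis. Your explicit verification that $\bar g-e\in H_{s+2,\delta}$ via the weighted multiplication property fills in a step the paper only asserts, but it is the same argument, not a different one.
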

\begin{proof} Note that the momentum constraint is automatically satisfied for solutions of the vacuum constraint equations with positive cosmological constant $\Lambda$ and $\bar{K}=\alpha \bar{g}$. Furthermore, picking $\alpha^2=\frac{2\Lambda}{n(n-1)}$, the Hamiltonian constraint becomes
\begin{align}\label{prescurv0}
R_{\bar{g}}=0.
\end{align}
If we look for solutions of the form $\bar{g}=u^{\frac{4}{n-2}}g$, the above equation becomes 
\begin{align}
\Delta_gu-c_nR(g)u=0,
\end{align}
where $c_n=\frac{1}{4}\frac{n-2}{n-1}$. Thus, noticing that under our hypotheses $R_g\in H_{s,\delta+2}$ and using the Proposition \ref{poincare}, one concludes that if 
\[
||(R_g)_{-}||_{C^0_2}<4\frac{n-1}{n-2}\frac{1}{C^2_g},
\]
then the above equation admits a unique positive solution such that $u-1\in H_{s+2,\delta}$, which implies $\bar{g}-e\in H_{s+2,\delta}$. This proves that $(\bar{g},\bar{K})$ gives a solution of the constraint equations with positive cosmological constant $\Lambda$ and a $H_{s+2,\delta}$-asymptotically flat metric $\bar{g}$. This finishes the proof.   
\end{proof}

\vspace{3mm}


\begin{remark}
The above result establishes a sufficient condition for the existence of umbilical reference solutions of the constraint equations. It is interesting to stress that the condition $||R_{g_{-}}||_{C^0_{2}}<4\frac{n-1}{n-2}\frac{1}{C^2_g}$ is \textit{open}. This allows us to produce some interesting examples of reference solutions for the thin-sandwich problem, as follows.
\end{remark}

\begin{exmp}
On $(\mathbb{R}^n,e)$, with $e$ the euclidean metric, there is an explicit estimate for the Poincar\'e constant $C_e$ \cite{bartnik2}. The above remark shows that if we take perturbations of the form $g=(1+f)e$, with $f\in H_{s+3,\delta}$, $s>\frac{n}{2}$, $-1\leq\delta<\frac{n}{2}-2$, being small enough, then we can guarantee that $||R_{g_{-}}||_{C^0_{2}}<4\frac{n-1}{n-2}\frac{1}{C^2_g}$. Thus, $g$ admits a conformal deformation to zero scalar curvature. The same could be said for $g'=g+h$, $h\in C^{\infty}_0$ sufficiently small in the same $H_{s+3,\delta}$-norm. It is worth noticing that metrics of the form $g=(1+f)e$ provide the Newtonian approximation of an initial data set for the constraint equations, which is a relevant situation in physics. Furthermore, we are guaranteeing existence of solutions for small compactly supported perturbations of such metrics. 
\end{exmp}

\begin{exmp}
Consider the Schwarzschild manifold, described by $(\mathbb{R}^n\backslash\{0\},g_{{\rm sc}})$, where 
\begin{align*}
g_{{\rm sc}}=\left(1+\frac{m}{2|x|^{n-2}}\right)^{\frac{4}{n-2}}\delta.
\end{align*} 
It is well-known that $R_{g_{{\rm sc}}}=0$ and it is straightforward to see that $|g_{{\rm sc}}-\delta|^2_{\delta}=O(|x|^{-2(n-2)})$, and thus $(1+|x|^2)^{\frac{\delta}{2}}|g_{{\rm sc}}-\delta|_{\delta}\in L^2$ iff $\delta<\frac{n}{2}-2$. A similar reasoning shows that $g_{{\rm sc}}$ is $H_{s,\delta}$-asymptotically flat for any $s>0$ and every $\delta<\frac{n}{2}-2$. Thus, $g_{{\rm sc}}$ provides an AE vacuum solution of the constraint equations, trivially satisfying $||R_{g_{sc}}||_{C^0_2}<4\frac{n-1}{n-2}\frac{1}{C^2_{g_{sc}}}$. Thus, similarly to the above example, we can consider $H_{s+3,\delta}$-small perturbations of the Schwarzschild metric, which can be deformed to zero scalar curvature and thus used a part of reference solutions for the thin-sandwich problem. 
\end{exmp}

\section{Linearisation at symmetric points - Compact case}

The aim of this section is to study the behavior of the non linear operator $\Phi$ around solutions where its linearization $D_2\Phi$ is not an isomorphism. We still consider reference solutions of the form treated above, that is, pairs $(\psi_0,\beta_0)$ induced from a solution of the form $(g_0, K=\alpha g_0)$, with $S_0=0$. There will be no need of assumptions on the sources generating $\epsilon$, besides their regularity. For ease of exposition, we will suppose $M$ is a closed $n$-dimensional smooth manifold, $n\geq 3$.

In \cite{ADRL} it has been shown that for a solution of the constraint equations of the type proposed above,  $D_2\Phi$ is an isomorphism if $g_0$ does not have any conformal Killing fields. There, it is shown that we can always find a smooth solution $g_0$ without conformal Killing fields. Here, we intend to focus on the case when $g_0$ admits Killing fields. It has already been pointed out that, for such metrics, uniqueness of solutions of the reduced thin sandwich equations must fail \cite{B-O},\cite{giulini1},\cite{bartnik}. Thus, our expectation is that, at least in a range of situations, existence does not fail, and that we can even bifurcate solutions at such \textit{singular} metrics.

We will begin by showing that, under suitable conditions, we can remove the space of Killing fields from the picture and get a well-posed problem on the complementary spaces. Being more precise, this time we consider $\Phi:U\subset \mathcal{E}_1 \times \mathcal{E}_2\mapsto \mathcal{Z}$, where now the Sobolev spaces involved in the definitions of $\mathcal{E}_1, \mathcal{E}_2$ and $\mathcal{Z}$ are not weighted. Here, $U$ is a neighborhood of a smooth solution $(\psi_0,\beta_0)$ constructed from $(g_0,K=\alpha g_0)$. Following the same procedure outlined in \cite{ADRL}, it is clear that such a solution always exists, since the constraints would reduce to the following equation
\begin{align}\label{kw}
R_{g_0}=2\Lambda+2\epsilon-\alpha^2n(n-1).
\end{align}
Thus, supposing that the energy density $\epsilon$ is smooth and using the fact that $M$ is compact, one can pick $\alpha$ sufficiently large so that the right-hand side of the previous equation is negative somewhere. Hence, Kazdan-Warner theorem guarantees the existence of a smooth solution \cite{KW}.

Under the above conditions, the linearization $L\doteq D_2\Phi_{(\psi_0,\beta_0)}:\mathcal{E}_2\to \mathcal{Z}$ is elliptic and formally self adjoint, and ${\rm Ker} L$ is the space of conformal killing vector fields of $g_0$. In particular, we have
\begin{align*}
H_{s+2}&={\rm Ker} L\oplus {\rm Ker} L^{\perp},\\
H_{s}&={\rm Ker}L\oplus {\rm Im}L,
\end{align*}
and thus $L:{\rm Ker}L^{\perp}\mapsto {\rm Im}L $ is an isomorphism and both ${\rm Ker}L^{\perp}$ and ${\rm Im}L$ are closed subsets, and thus Banach. Thus, we can prove the following lemma.

\begin{lemma}\label{lemmasharp}
Let $M$ be an $n$-dimensional compact smooth manifold. Let $g_0\in H_{s+3}$ be a Riemannian metric and consider the following map
\begin{align}
\begin{split}
& \widetilde{\Phi}_{g_0}: U\subset H_{s+1} \times H_{s+1}\times {\rm Im} L\times {\rm Ker} L^{\perp}\to H_{s}\\
& (\dot{g},\epsilon,S,\beta)\mapsto \mathrm{div}_{g_0} \Bigg(\sqrt{\frac{2\epsilon_{\Lambda}-R_{g_0}}{(\mathrm{tr}_{g_0}\gamma)^{2}-|\gamma|^{2}_{g_0}}}\big(\gamma-\mathrm{tr}_{g_0} \gamma\,
g_0\big)\Bigg) - S,
\end{split}
\end{align}
where $U$ is a neighbourhood of $(\psi_0=(\dot{g}_0,\epsilon_0,0),\beta_0)$ and $s>\frac{n}{2}$. If the space of conformal Killing vector fields of $g_0$ consists merely of Killing vector fields, then there are open subsets $U_1\subset \widetilde{\mathcal{E}}_1\doteq H_{s+1}\times H_{s+1}\times {\rm Im }L^{\perp}$ and $U_2\subset {\rm Ker}L^{\perp}$, with $\psi_0\in U_1$ and $\beta_0\in U_2$ such that the equation
\begin{align}
\widetilde{\Phi}_{g_0}(\psi,\beta)=0
\end{align}
has a unique solution $\beta=\beta(\psi)\in U_2$ for all $\psi\in U_1$.
\end{lemma}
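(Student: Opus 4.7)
The plan is to verify the hypotheses of the Banach-space implicit function theorem for $\widetilde{\Phi}_{g_0}$, regarded as a map from a neighbourhood of $(\psi_0,\beta_0)$ in $\widetilde{\mathcal{E}}_1\times({\rm Ker}\,L)^{\perp}$ into the closed subspace ${\rm Im}\,L\subset H_s$. I would first verify three preliminary facts: (i) $\widetilde{\Phi}_{g_0}$ is $C^1$ on a small enough neighbourhood, by Sobolev multiplication for $s>\frac{n}{2}$ together with the strict positivity of both $2\epsilon_{\Lambda}-R_{g_0}$ and $(\mathrm{tr}_{g_0}\gamma)^{2}-|\gamma|_{g_0}^{2}$ at the reference point, which makes the square root smooth nearby; (ii) $\widetilde{\Phi}_{g_0}(\psi_0,\beta_0)=0$, by direct substitution using $K_0=\alpha g_0$, $S_0=0$ and the Hamiltonian constraint (\ref{kw}); and (iii) $D_\beta\widetilde{\Phi}_{g_0}|_{(\psi_0,\beta_0)}$ coincides with the operator $L$ recalled immediately above the lemma.

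The substantive step is to show that the range of $\widetilde{\Phi}_{g_0}$ on the prescribed domain actually lies in ${\rm Im}\,L=({\rm Ker}\,L)^{\perp}$, so that the codomain may legitimately be restricted. Given $Y\in{\rm Ker}\,L$, which by hypothesis is a Killing field of $g_0$, integration by parts together with the symmetry of $\gamma-\mathrm{tr}_{g_0}\gamma\,g_0$ yields
\begin{align*}
\int_M\langle\widetilde{\Phi}_{g_0}(\psi,\beta),Y\rangle_{g_0}\,\mu_{g_0}
&=-\tfrac{1}{2}\int_M\sqrt{\tfrac{2\epsilon_{\Lambda}-R_{g_0}}{(\mathrm{tr}_{g_0}\gamma)^{2}-|\gamma|_{g_0}^{2}}}\,\langle\gamma-\mathrm{tr}_{g_0}\gamma\,g_0,\pounds_Y g_0\rangle_{g_0}\,\mu_{g_0} \\
&\quad-\int_M\langle S,Y\rangle_{g_0}\,\mu_{g_0}.
\end{align*}
The first term vanishes because $\pounds_Y g_0=0$, and the second vanishes because $S\in{\rm Im}\,L$ is $L^2$-orthogonal to ${\rm Ker}\,L$. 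Hence $\widetilde{\Phi}_{g_0}(\psi,\beta)\in{\rm Im}\,L$ throughout the domain.

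With (i)--(iii) and the range reduction in hand, the implicit function theorem applies cleanly: the partial derivative at $(\psi_0,\beta_0)$ of the restricted map is $L|_{({\rm Ker}\,L)^{\perp}}\colon({\rm Ker}\,L)^{\perp}\to{\rm Im}\,L$, which was already observed to be a topological isomorphism of Banach spaces. This produces open neighbourhoods $U_1\subset\widetilde{\mathcal{E}}_1$ and $U_2\subset({\rm Ker}\,L)^{\perp}$ together with a unique continuous map $\psi\mapsto\beta(\psi)\in U_2$ satisfying $\widetilde{\Phi}_{g_0}(\psi,\beta(\psi))=0$ for every $\psi\in U_1$.

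The hypothesis that every conformal Killing field of $g_0$ is a genuine Killing field is exactly what makes the range reduction work, and I expect this to be the one delicate point. If $Y$ were merely conformal Killing with $\mathrm{div}_{g_0}Y\not\equiv 0$, then $\pounds_Y g_0=\frac{2}{n}(\mathrm{div}_{g_0}Y)\,g_0$ would pair non-trivially with the trace part of $\gamma-\mathrm{tr}_{g_0}\gamma\,g_0$, producing a generically nonzero obstruction and destroying the inclusion $\widetilde{\Phi}_{g_0}(\psi,\beta)\in{\rm Im}\,L$. The remaining steps are then standard implicit-function-theorem bookkeeping.
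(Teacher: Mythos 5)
Your proposal is correct and follows essentially the same route as the paper: establish the range inclusion ${\rm Im}(\widetilde{\Phi}_{g_0})\subset{\rm Im}\,L$ by pairing against a Killing field $Y\in{\rm Ker}\,L$, integrating by parts, using the symmetry of $\gamma-\mathrm{tr}_{g_0}\gamma\,g_0$ to replace $\nabla Y$ by $\tfrac12\pounds_Y g_0$, and then applying the implicit function theorem with the isomorphism $L\colon({\rm Ker}\,L)^{\perp}\to{\rm Im}\,L$. Your explicit remark that the $S$-term drops because $S\in{\rm Im}\,L=({\rm Ker}\,L)^{\perp}$ is a detail the paper leaves implicit, but it is the same argument.
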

\begin{proof}
By definition, we have $\widetilde{\Phi}_{g_0}:U\subset \widetilde{\mathcal{E}}_1\times {\rm Ker}L^{\perp}\to H_s$, $s>\frac{n}{2}$. Thus, if ${\rm Im}(\widetilde{\Phi}_{g_0})\subset {\rm Im}L$, then, because of the above arguments, $D_2{\widetilde{\Phi}_{g_0}}{}_{(\psi_0,\beta_0)}:{\rm Ker}L^{\perp}\to {\rm Im}L$ is an isomorphism, and thus the implicit function theorem finishes the proof. Thus, we need to show that ${\rm Im}(\widetilde{\Phi}_{g_0})\subset {\rm Im}L$. Fixing $u\in {\rm Ker} L$ and $(\psi,\beta)\in U$ and denoting 
\[
N^{-1}\doteq \sqrt{\frac{2\epsilon_{\Lambda}-R_{g_0}}{(\mathrm{tr}_{g_0}\gamma)^{2}-|\gamma|^{2}_{g_0}}},
\]
one computes
\begin{align*}
\int_M \langle u, \widetilde{\Phi}_{g_0}(\psi,\beta)\rangle_{g_0} \mu_{g_0} &= \int_{M} \Big\langle u, \mathrm{div}_{g_0} \Big(\frac{1}{N}\big(\gamma-\mathrm{tr}_{g_0} \gamma  {g_0}\big)\Big) - S\Big\rangle_{g_0}\mu_{g_0}\\
&=-\int_M\frac{1}{N}\langle \nabla u, \gamma-\mathrm{tr}_{g_0} \gamma  {g_0}\rangle_{g_0} \mu_{g_0}.
\end{align*}
Since $\gamma - g_0\:\mathrm{tr}_{g_0}\gamma$ is symmetric we have
\[
 \langle \nabla u, \gamma-g_0\mathrm{tr}_{g_0} \gamma \rangle_{g_0}=\frac{1}{2}\langle \pounds_{u}g_0,\gamma- g_0 \mathrm{tr}_{g_0}\gamma \rangle=0
\]
where we used the assumption that $u$ is a Killing vector field. Thus ${\rm Im}(\widetilde{\Phi}_{g_0})\subset {\rm Ker} L^{\perp}$, and hence ${\rm Im}(\widetilde{\Phi}_{g_0})\subset {\rm Im} L$. This proves the claim. 
\end{proof}

\vspace{3mm}

The above lemma shows that, by restricting the functional spaces appropriately, we can still solve the RTSE in a neighborhood of our reference solution, even if it possesses continuous symmetries. All that is required is that every conformal Killing field must be in fact a Killing field. Thus, in such situations, the thin-sandwich problem can be formulated in a way such that it is well-posed around these symmetric reference solutions. 
\begin{exmp}
It might be instructive to present a simple example of a reference solution of the constraint equations satisfying the requirements of the above lemma. Thus, what we have to find is a solution for {\rm(}\ref{kw}{\rm)} for which the space of conformal Killing vectors coincides with the space of Killing vectors. Consider that $M^n=\mathbb{S}^1\times \Sigma^{n-1}$, where $\Sigma^{n-1}$ is a closed $(n-1)$-dimensional manifold. Consider the following metric on $M$
\begin{align}
g=\pi^{*}h_1+\sigma^{*}h_2,
\end{align}
where $\pi$ and $\sigma$ are the projections of $M$ into its first and second factors respectively, and $h_1$ is the standard metric on $\mathbb{S}^1$, which is obviously flat. Thus, we get that $R_g=R_{h_2}$, and {\rm(}\ref{kw}{\rm)} reads
\begin{align*}
R_{h_2}=2\epsilon_{\Lambda}-\alpha^2n(n-1).
\end{align*}  
Clearly, we must demand $2(\epsilon_\Lambda-\Lambda)$ to be a function only on $\Sigma$. If we consider $n-1\geq 3$ and $\alpha^2>\frac{2}{n(n-1)}\min_{x\in \Sigma}\epsilon_{\Lambda}(x)$, then following the same procedure described in \cite{ADRL}, we can guarantee the existence of a smooth solution to the previous equation without conformal Killing fields. This is achieved by using Kazdan-Warner's theorem \cite{KW} combined with a result shown by J. Lohkamp in \cite{Lohkamp}, which guarantees the existence of a smooth metric with scalar curvature equal to $-1$ which has negative definite Ricci tensor. This procedure yields a smooth solution $h_2$ which is conformal to a metric without conformal Killing fields, and this guarantees that $h_2$ does not possess any conformal Killing fields either. In this way, we get that the space of Killing vectors of the metric $g$ is generated by the lift from $\mathbb{S}^1$ to $M$ of any Killing vector field of $h_1$ on $\mathbb{S}^1$. Furthermore, such Killing vectors are the only conformal Killing fields, and thus, this is an example of a setting where the results presented in \cite{ADRL} do not apply, but the above lemma does. In particular, this kind of examples might even be of interest in physics, since they would represent the kind of initial data sets appropriate for Kaluza-Klein theories, which were the starting point for modern extra dimensional theories, such as string theory.   
\end{exmp}
\bigskip
Next, we would like to obtain a qualitative description of the set of solutions of RTSE which do possess conformal Killing vectors. We would expect that around such data the non-uniqueness issues remain, but that we can still guarantee existence, at least under some conditions. In order to address this issue, we will employ Crandall-Robinowitz bifurcation criteria \cite{crandal}. The result we are interested in is the following one.

\begin{thm}[Crandall-Rabinowitz]
Let $X$ and $Y$ be Banach spaces and let $F$ be a $C^1$ mapping of an open neighborhood of a given point $(\bar{\lambda}, \bar{x})\in \mathbb{R}\times X$ into $Y$. Let ${\rm Ker}(D_2F_{(\bar{\lambda}, \bar{x})})= {\rm span}\{{x_0}\}$ be one dimensional and ${\rm codim} ({\rm Im}(D_2F_{(\bar{\lambda}, \bar{x})}))= 1$. Let $D_1F_{(\bar{\lambda}, \bar{x})}\not\in {\rm Im}(D_{2}F_{(\bar{\lambda}, \bar{x})})$. If $Z$ is a complement of ${\rm span}\{{x_0}\}$ in $X$, then the solutions of $F(\lambda, x)=F(\bar{\lambda}, \bar{x})$ near $(\bar{\lambda}, \bar{x})$ form a curve $(\lambda (s), x (s)) = (\bar{\lambda} + \tau (s), \bar{x} + s x_0 + z (s))$, where $s \mapsto (\tau (s), z (s)) \in \mathbb{R} \times Z$ is a $C^1$ function near $s = 0$ and $\tau (0) = \tau' (0) = 0, z (0) = z' (0) = 0$. Moreover, if $F$ is k-times continuously differentiable, so are $\tau(s)$ and $z(s)$.
\end{thm}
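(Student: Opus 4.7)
The plan is to reduce the statement to a single application of the implicit function theorem after decomposing $X = {\rm span}\{x_0\} \oplus Z$ and augmenting the parameter space by the bifurcation parameter $\lambda$. Without loss of generality I would translate so that $(\bar{\lambda}, \bar{x}) = (0, 0)$ and $F(0, 0) = 0$. Then I would introduce the auxiliary map
$$G : \mathbb{R} \times \mathbb{R} \times Z \to Y, \qquad G(\lambda, s, z) = F(\lambda, s x_0 + z),$$
which is $C^1$ and satisfies $G(0, 0, 0) = 0$. The strategy is to view $s$ as the new independent parameter and solve $G(\lambda, s, z) = 0$ for $(\lambda, z)$ as a function of $s$.

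The crucial step is to show that the partial derivative $T := D_{(\lambda, z)} G|_{(0,0,0)} : \mathbb{R} \times Z \to Y$, given by $T(\mu, w) = \mu\, D_1 F_{(0,0)} + D_2 F_{(0,0)}(w)$, is a Banach space isomorphism. For injectivity, if $\mu D_1 F + D_2 F(w) = 0$ then $\mu D_1 F \in {\rm Im}(D_2 F)$; the transversality hypothesis $D_1 F \notin {\rm Im}(D_2 F)$ forces $\mu = 0$, and then $w \in {\rm Ker}(D_2 F) \cap Z = \{0\}$ since $Z$ is a topological complement of ${\rm span}\{x_0\}$. For surjectivity, the restriction $D_2 F|_Z : Z \to {\rm Im}(D_2 F)$ is a Banach isomorphism (bijective and continuous between Banach spaces, noting ${\rm Im}(D_2 F)$ is closed because it has finite codimension), and combined with the one-dimensional complement ${\rm span}\{D_1 F\}$ supplied by the transversality assumption and the codimension-one hypothesis, the image of $T$ fills $Y$.

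With $T$ an isomorphism, the implicit function theorem yields a unique $C^1$ curve $s \mapsto (\lambda(s), z(s)) \in \mathbb{R} \times Z$ near $s=0$ with $\lambda(0) = 0$, $z(0) = 0$, solving $G(\lambda(s), s, z(s)) = 0$, and this curve exhausts the solution set of $F(\lambda, x) = F(\bar\lambda, \bar x)$ in a neighborhood. Setting $\tau(s) = \lambda(s)$ gives $\tau(0) = 0$ and $z(0) = 0$. To obtain the tangency conditions, I would differentiate $G(\lambda(s), s, z(s)) = 0$ at $s=0$ to get
$$\tau'(0)\, D_1 F + D_2 F(x_0) + D_2 F(z'(0)) = 0.$$
Since $D_2 F(x_0) = 0$, this reads $\tau'(0)\, D_1 F + D_2 F(z'(0)) = 0$, and applying the injectivity of $T$ just established forces $\tau'(0) = 0$ and $z'(0) = 0$. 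The higher regularity assertion is automatic from the $C^k$ version of the implicit function theorem applied to the same $G$.

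The main obstacle is conceptual rather than computational: the whole argument stands or falls on showing that $T$ is an isomorphism, and this is precisely where the three hypotheses of Crandall--Rabinowitz combine. The one-dimensional kernel lets one strip off the $s x_0$ direction and work in $Z$; the codimension-one range of $D_2 F$ leaves exactly the room needed to accommodate one extra direction in the range; and the transversality condition guarantees that $\lambda$ supplies exactly that missing direction. Absent any one of these, $T$ degenerates and the direct implicit function argument collapses, which is the analytic content of the theorem.
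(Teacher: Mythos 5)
Your argument is correct and is essentially the standard proof of this result: the paper itself quotes the theorem from Crandall--Rabinowitz without proof, and their original argument proceeds exactly as you do, by splitting $X={\rm span}\{x_0\}\oplus Z$, forming $G(\lambda,s,z)=F(\lambda,sx_0+z)$, and checking that $(\mu,w)\mapsto \mu D_1F+D_2F(w)$ is an isomorphism of $\mathbb{R}\times Z$ onto $Y$ so that the implicit function theorem applies in $(\lambda,z)$ with $s$ as parameter. Your verification of injectivity, surjectivity (including the closedness of the finite-codimensional range), the tangency conditions $\tau'(0)=z'(0)=0$, and the $C^k$ upgrade are all sound.
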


It is not a completely trivial task to produce solutions of the constraint equations satisfying all the properties of the above theorem which might be \textit{physically meaningful}.  The examples we will consider in the sequel permit a straightforward application of the theorem. However,  
their physical meaning is not indisputable. 
Given a compact Riemannian manifold $(\Sigma^{n-1}, h)$ with $n\ge 4$, we define a metric of the form $g_0=d\theta\otimes d\theta + f^2(\theta)h$ in  $M^n=\mathbb{S}^1\times \Sigma^{n-1}$, where $\theta$ is the standard angular coordinate on the circle and $f$ is a prescribed, non-constant, strictly positive function. Again, we will consider $K_0=\alpha g_0$ and $S_0=0$. Thus, we need to solve the hamiltonian constraint written in this setting as
\begin{align}
R_{g_0}=\frac{1}{f^2}R_h-2(n-1)\frac{f''}{f}-(n-1)(n-2)=2\Lambda+2\widetilde{\epsilon}_0-\alpha^2n(n-1),
\end{align}
where $\tilde{\epsilon}_0$ represents the energy density of some source. In order to produce a solution of the above equation such that $g_0$ only possess one conformal Killing field, which is not a Killing vector, we can choose $h$ such that $R_h=-1$ and such that $h$ does not have any conformal Killing vectors. Then, we consider the energy density given by
\begin{align*}
2\widetilde{\epsilon}_0(\Lambda,\theta)=-2\Lambda-(n-1)(n(1-\alpha^2)-2)-\frac{1+2(n-1)ff''}{f^2}.
\end{align*}
With this induced energy density, we can produce reference solutions for the RTSE by choosing a smooth lapse $N_0>0$ and shift $\beta_0$ and defining $\dot{g}_0\doteq -2\alpha N_0 g_0 + \pounds_{\beta_0}g_0$. Then, $\Phi(g_0,\dot{g}_0,\widetilde{\epsilon}_0(\Lambda),0,\beta_0)=0$. The theorem presented in \cite{ADRL} is not appropriate in this context since ${\rm Ker} (D_{2}\Phi_{(\psi_0(\Lambda),\beta_0)})$ is the space of conformal Killing fields of $g_0$, and $f(\theta)\partial_{\theta}$ is a conformal Killing field of $g_0$. Therefore, $D_{2}\Phi_{(\psi_0(\Lambda),\beta_0)}$ is not an isomorphism and we cannot apply the implicit function argument. Even the refinement of this theorem, presented in Lemma \ref{lemmasharp}, does not work either, since we have a conformal Killing field, which is not a Killing field. We will now show that we can bifurcate this curve of solutions, parameterized by the cosmological constant $\Lambda$, and get (at least) another family of solutions passing through $(\psi_0(\Lambda),\beta_0)$. With this in mind, first define $\bar{\psi}(\Lambda)\doteq (g_0,\dot{g}_0,\epsilon_0(\Lambda),(\Lambda-\Lambda_0)S)$, where $S\in H_{s}$ is chosen such that is has non-zero projection onto ${\rm span}\{f\partial_\theta\}$. Clearly, $\bar{\psi}(\Lambda)$ and $\psi_0(\Lambda)$ intersect at $\Lambda_0$. Now, define
\begin{align}
\begin{split}
& F:U\subset \mathbb{R}\times H_{s+2} \to H_{s}\\
& (\Lambda,\beta)\mapsto F(\Lambda,\beta)\doteq \Phi(\bar{\psi}(\Lambda),\beta),
\end{split}
\end{align}
where $U$ is an open set chosen such that $\Phi(\bar{\psi}(\Lambda),\beta)$ is well-defined. Thus, 
\[
{\rm Ker}(D_2F_{(\Lambda_0,\beta_0)})={\rm span}\{f\partial_\theta\}
\]
and, since $D_2F_{(\psi(\Lambda_0),\beta_0)}$ is an elliptic formally self-adjoint operator, we have
\[
H_s={\rm Ker}(D_2F_{(\Lambda_0,\beta_0)})\oplus {\rm Im}(D_2F_{(\Lambda_0,\beta_0)}).
\]
Thus ${\rm codim} ({\rm Im}(D_2F_{(\Lambda_0,\beta_0)}))=1$. Furthermore, a straightforward computation gives
\begin{align*}
D_1F_{(\Lambda_0,\beta_0)}=-S\not\in {\rm Im}(D_2F_{(\Lambda_0,\beta_0)}),
\end{align*}
since $S$ has non-zero projection onto ${\rm Ker}(D_2F_{(\Lambda_0,\beta_0)})$. Thus, Crandall-Rabinowitz's theorem applies, and we get that the solutions of $F(\Lambda,\beta)=0$  around $(\Lambda_0,\beta_0)$ form a curve $(\Lambda(s),\beta(s))=(\Lambda_0+\tau(s),\beta_0+sf\partial_\theta+z(s))$ with
\[
s\mapsto (\tau(s),z(s))\in \mathbb{R}\times {\rm Im}(D_2F(\Lambda_0,\beta_0))
\]
and $\tau(0)=\tau'(0)=0$, $z(0)=z'(0)=0$. 

\subsection*{Neighbourhoods of umbilical reference solutions with conformal Killing fields}

From the above discussion we know that the results presented in \cite{ADRL} can be sharpened in the following sense. Given an umbilical reference solution of the constraint equations of the form $(g_0,K_0=\alpha g_0)$, where the momentum density is zero, and we may have a non-zero energy density $\epsilon_0$, suppose that the space of conformal Killing fields of $g_0$ consists merely of Killing fields. We have seen that in this case our functional spaces can be refined so that the implicit function argument can be applied. 
Thus, what can be said about  umbilical reference solutions which admit conformal Killing fields such that the implicit function argument cannot be applied. A natural question at this point would be whether in these cases we can find another umbilical reference solution \textit{close} to the original one, which induces a reference solution $(\psi,\beta)$ for the RTSE for which the implicit function argument \textit{can} be applied. 

Thus, we consider a smooth solution of the vacuum constraint equations of the form $(g_0,K=\frac{\tau}{n} g_0)$, where $\tau$ is constant which represents the mean curvature of the embedded hypersurface $M\hookrightarrow M\times\mathbb{R}$. We suppose that $g_0$ has non-trivial conformal Killing fields. Since we are considering vacuum, $g_0$ satisfies 
\begin{align*}
R_{g_0}=2\Lambda-\frac{\tau^2}{n}(n-1)
\end{align*}
and, if $\Lambda\neq 0$, we suppose that $2\Lambda-\frac{\tau^2}{n}(n-1)<0$ which means that $g_0$ has negative constant scalar curvature. Our aim is to find another solution of the form $(\bar{g},\frac{\tau}{n}\bar{g})$, with $R_{\bar{g}}=R_{g_0}$ such that $\bar{g}$ is close to $g_0$ and ${\rm Ker}(\Delta_{\bar{g}, {\rm conf}})=\{0\}$, so that $(\bar{g},\frac{\tau}{n}\bar{g})$ induces a reference solution $(\bar{\psi}_0,\bar{\beta}_0)$ where the implicit function argument can be applied. In order to do this, we will need some auxiliary results. 

First, consider the set of smooth Riemannian metrics on a closed manifold $M^n$, with $n\geq 3$, denoted by $\mathcal{M}$, endowed with the distance function introduced in \cite{Eigenlap} and defined by
\begin{align}
d(g_1,g_2)\doteq d_0(g_1,g_2) + d_1(g_1,g_2),
\end{align}
where $d_0$ is the distance defined on the space of symmetric $(0,2)$-tensor fields on $M$, denoted by $\mathcal{S}(M)$ given by
\begin{align*}
d_0(g_1,g_2)=\sum_{k=0}^{\infty}\frac{1}{2^k}\frac{p_k(g_1-g_2)}{1+p_k(g_1-g_2)}
\end{align*}
where the semi-norms $p_k$ are defined by
\begin{align*}
p_k(h)=\sum_{\ell=0}^k\sup_{x\in M}|\nabla^{\ell}h|_e,
\end{align*}
for some fixed  smooth Riemannian metric $e$ on $M$ and the corresponding Riemannian connection $\nabla$.  The distance $d_1$ is defined on $\mathcal{M}$ as follows:
\begin{align}
\begin{split}
d_1(g_1,g_2)&=\sup_{x\in M}{d_1}_{x}({g_1}_{x},{g_2}_{x}),\\
{d_1}_{x}({g_1}_{x},{g_2}_{x})&=\inf\{\delta>0 : e^{-\delta}{g_1}_{x}<{g_2}_x< e^{\delta}{g_{1}}_x\},
\end{split}
\end{align}
where $g_1<g_2$ means $g_2-g_1\in \mathcal{S}_x(M)$ is positive definite. In \cite{Eigenlap}, the authors show that $(\mathcal{M},d)$ is a Fr\'echet space. We will consider this topology on $\mathcal{M}$ from now on. In this context, in \cite{genericmetrics} it is shown that the subset of $\mathcal{M}$ with no conformal symmetries is dense in $\mathcal{M}$. The key observation to be made here is that the density of such subset in $\mathcal{M}$ implies that, given any $C^k$ neighborhood $U$ of an element $g_0\in \mathcal{M}$ (with respect to the semi-norm $p_k$), we can always find a metric $g\in \mathcal{M}$ which is sufficiently close to $g_0$ in the distance $d$, so that $g\in U$ and $g$ does not possess any conformal Killing fields.

\begin{prop}
Consider a compact $n$-dimensional manifold $M$, with $n\geq 3$. Given a smooth solution of the vacuum constraint equations on $M$ of the form $(g_0,K_0=\frac{\tau}{n} g_0)$ with constant mean curvature $\tau$ and constant scalar curvature $R_{g_0}=-1$, we can find another smooth solution of the vacuum constraint equations with the same mean and scalar curvatures, which is as close to $(g_0,K_0)$ as we want and does not admit any conformal Killing fields.   
\end{prop}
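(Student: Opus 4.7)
The plan is to combine the density of conformally rigid metrics (from \cite{genericmetrics}) with the solvability of the prescribed scalar curvature equation in the Yamabe--negative conformal class.

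First, I would recall that for the umbilical data $(g_0,K_0=\frac{\tau}{n}g_0)$, the momentum constraint is automatic, and the Hamiltonian constraint reduces to $R_{g_0}=2\Lambda-\frac{\tau^2(n-1)}{n}$. Since this is a constant determined by $(\tau,\Lambda)$, any new umbilical pair $(\bar g,\frac{\tau}{n}\bar g)$ with $R_{\bar g}=R_{g_0}=-1$ will automatically solve the vacuum constraint equations with the same $\tau$ and $\Lambda$. So the task reduces to producing a smooth metric $\bar g$, close to $g_0$ in the Fr\'echet topology of $\mathcal{M}$, with $R_{\bar g}=-1$ and with no conformal Killing fields.

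Second, I would fix any $C^k$--neighbourhood $U$ of $g_0$ (with respect to some $p_k$) and apply the density result of \cite{genericmetrics}: there exists a smooth metric $g_1\in U$ arbitrarily close to $g_0$ in $d$ which admits no conformal Killing fields. In general $R_{g_1}\neq -1$, so I would correct it by a conformal change $\bar g = u^{\frac{4}{n-2}}g_1$, leading to the prescribed scalar curvature equation
\begin{equation*}
-c_n \Delta_{g_1} u + R_{g_1}\,u = -\,u^{\frac{n+2}{n-2}},\qquad c_n=\tfrac{4(n-1)}{n-2}.
\end{equation*}
Since $R_{g_0}=-1$, the Yamabe invariant $Y(g_0)$ is strictly negative; by continuity of the Yamabe invariant with respect to the metric, $Y(g_1)<0$ as well provided $g_1$ is chosen close enough to $g_0$. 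In the Yamabe--negative class the prescribed scalar curvature problem for a negative constant (here $-1$) admits a unique smooth positive solution $u$ (standard variational/monotonicity arguments in the negative Yamabe case; alternatively, upper/lower solution methods with $u=$ constant as barriers using the sign of $R_{g_1}$). This produces a smooth $\bar g$ with $R_{\bar g}=-1$.

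Third, I would check that the two required properties transfer to $\bar g$. Conformal Killing fields are an invariant of the conformal class, so $\bar g$ has no conformal Killing fields precisely because $g_1$ has none; thus $\ker \Delta_{\bar g,\mathrm{conf}}=\{0\}$. To ensure $\bar g$ is close to $g_0$ in $\mathcal{M}$, I would use continuous dependence of the solution $u$ of the prescribed scalar curvature equation on the background metric: when $g_1\to g_0$ in $C^k$, the (unique) solution $u$ converges in $C^{k}$ (by elliptic Schauder estimates and uniqueness of the solution in the negative Yamabe class) to the solution corresponding to $g_0$, which is $u\equiv 1$. Thus $\bar g=u^{4/(n-2)}g_1\to g_0$ in every $p_k$, hence in $d$. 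Choosing $g_1$ close enough to $g_0$ at a sufficiently high regularity level, we obtain $\bar g$ within any prescribed $d$--neighbourhood of $g_0$.

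Finally, the pair $(\bar g,\bar K=\tfrac{\tau}{n}\bar g)$ is a smooth umbilical vacuum solution of the constraint equations with the same mean curvature $\tau$ and the same scalar curvature $R_{\bar g}=R_{g_0}=-1$ as the original, and with no conformal Killing fields, proving the proposition. The main obstacle I anticipate is the continuous dependence step in the conformal correction: one must ensure that the Yamabe--type solvability and the uniqueness of $u$ are stable under $C^k$--perturbations of the background metric, so that $\bar g$ stays in the chosen neighbourhood of $g_0$. Once this is controlled, the rest of the argument is a direct concatenation of the density theorem of \cite{genericmetrics} with standard conformal prescribed--curvature theory.
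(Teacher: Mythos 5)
Your proposal is correct and follows the same overall architecture as the paper: reduce to producing a nearby smooth metric with $R_{\bar g}=-1$ and no conformal Killing fields, obtain conformal rigidity from the density theorem of \cite{genericmetrics}, and restore the scalar curvature by a conformal factor, noting that conformal Killing fields are a conformal invariant so the rigidity survives. The difference lies in how the conformal correction is carried out. The paper applies the implicit function theorem to the map $F(g,u)=\Delta_g u - c_n\bigl(R(g)u+u^{N}\bigr)$ at the point $(g_0,1)$, using that $D_2F_{(g_0,1)}=\Delta_{g_0}-c_n(N-1)$ is an isomorphism (here $R_{g_0}=-1$ is used to make $(g_0,1)$ a zero of $F$ and to produce the strictly positive zeroth-order coefficient); this packages existence, uniqueness near $u=1$, positivity, and the continuous dependence $u\to 1$ as $g\to g_0$ into a single step. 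You instead invoke the global solvability and uniqueness theory in the negative Yamabe class and then argue continuous dependence separately, which you correctly flag as the delicate point. That gap is fillable by elementary means: if $g_1$ is $C^2$-close to $g_0$ then $R_{g_1}\in[-1-\epsilon,-1+\epsilon]$ is everywhere negative (so one does not even need continuity of the Yamabe invariant), the constants $(1\mp\epsilon)^{1/(N-1)}$ are barriers sandwiching the unique positive solution $u$, and elliptic bootstrapping upgrades the resulting $C^0$ bound $\lvert u-1\rvert=O(\epsilon)$ to $C^k$ convergence. Your route buys a statement valid for the whole negative Yamabe class rather than just a small neighbourhood of $(g_0,1)$, at the cost of importing the (heavier) negative-case Yamabe existence theory; the paper's IFT argument is more self-contained and local, which is all the proposition requires.
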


\begin{proof}
Consider $g\in \mathcal{M}$ and $\bar{g}=u^{\frac{4}{n-2}}g$ a conformal rescaling. Thus, 
\begin{align*}
R_{\bar{g}}=u^{-\frac{n+2}{n-2}}\left( uR_g - \frac{4(n-1)}{n-2}\Delta_gu \right).
\end{align*}
Fix $s>\frac{n}{2}+2$ and define the map
\begin{align*}
& F:U\subset H_{s}\times H_s\to H_{s-2},\\
& (g,u)\mapsto \Delta_gu-c_n(R(g)u+u^N),
\end{align*}
where $c_n=\frac{n-2}{4(n-1)}$, $N=\frac{n+2}{n-2}$ and $U$ is a neighbourhood of $(g_0,1)\in H_s\times H_s\subset C^2\times C^2$. It is clear that $F(g_0,1)=0$ and  we also have that
\begin{align*}
D_2F_{(g_0,1)}\cdot v= \Delta_{g_0}v-c_n(N-1)v.
\end{align*}
Since $c_n(N-1)>0$, it follows that $D_2F_{(g_0,1)}:H_s\mapsto H_{s-2}$ is an isomorphism. Thus, applying the implicit function theorem, one concludes that there exist neighbourhoods $U_1\subset H_s$ and $U_2\subset H_s$ of $g_0$ and $1$ respectively, and a unique map $f:U_1\to U_2$ such that $F(g,f(g))=0$, for all $g\in U_1$. Furthermore, since the coefficients are smooth, we can increase the regularity of the solution $u=f(g)$ and get a smooth solution. Moreover, taking a small enough neighborhood of $(g_0,1)$ we can guarantee that $u>0$. Thus, we get $\bar{g}\doteq u^{\frac{4}{n-2}}g\in\mathcal{M}$ satisfying $R_{\bar{g}}=-1$. Also, given any $\epsilon>0$, we can fit the above procedure so that 
\[
||\bar{g}-g||_{C^k}\leq C||\bar{g}-g||_{H_{\frac{n}{2}+k}}<\epsilon
\]
for any integer $k>0$, which also shows that, setting $\bar{K}=\frac{\tau}{n}\bar{g}$, then $||\bar{K}-\frac{\tau}{n}g||_{C^k}<\epsilon'$. 


The final argument is given by noticing that, since the subset of elements in $\mathcal{M}$ without any conformal Killing fields in dense in $\mathcal{M}$, given a $C^k$-neighbourhood of $g_0$, we can always find an element $g\in \mathcal{M}$ such that $||g-g_0||_{C^k}$ is as small as we want, and such that $g$ does not admit conformal Killing fields. Then, choosing such $g\in U_1$, we get $u=f(g)$ from the implicit function argument described above, and a smooth metric $\bar{g}=u^{\frac{4}{n-2}}g$, which cannot admit any conformal Killing fields, since that would contradict the fact that $g$ does not. Taking into account all these considerations, it is clear that $(\bar{g},\bar{K})$ solve the same vacuum constraints as $(g_0,K_0)$ and the two solutions have the same mean and scalar curvatures. Moreover, we can construct $(\bar{g},\bar{K})$ so that it is as close to $(g_0,K_0)$ as we want in any $C^k$-topology. This concludes the proof.
\end{proof}

\vspace{3mm}

It is clear that the a mere rescaling argument permits to extend  the condition $R_{g_0}=-1$ to the more general $R_{g_0}=2\Lambda-\frac{\tau^2}{n}(n-1)<0$. Hence, the same procedure described above can be applied to the general situation and  the following theorem holds.

\begin{thm}\label{generic-umbilical}
Given any umbilical smooth solution to the vacuum constraint equations $(g_0,K_0)$ on a compact $n$-dimensional manifold $M$ satisfying $2\Lambda-\frac{\tau^2}{n}(n-1)<0$, with $n\geq 3$, there is another smooth solution $(\bar{g},\bar{K})$, which is as close as we want to $(g_0,K_0)$ in any $C^k$-topology and has the same mean and scalar curvatures as $(g_0,K_0)$, for which the induced solution $(\bar{\psi}_0,\bar{\beta}_0)$ for the RTSE  admits a neighborhood where the RTSE are well-posed.  
\end{thm}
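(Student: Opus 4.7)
The plan is to reduce the general negative-scalar-curvature case to the normalized case $R_{g_{0}}=-1$ already handled by the preceding Proposition, and then to apply the main well-posedness result of \cite{ADRL} at an umbilical reference solution without conformal Killing fields.

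First, I would normalize the scalar curvature by a constant rescaling. Set $\mu^{2}\doteq -(2\Lambda-\tfrac{\tau^{2}}{n}(n-1))>0$ and $\tilde g_{0}\doteq \mu^{2}g_{0}$. Since constant rescaling gives $R_{\mu^{2}g}=\mu^{-2}R_{g}$, one has $R_{\tilde g_{0}}=-1$. Choosing $\tilde\Lambda$ so that $2\tilde\Lambda-\tfrac{\tau^{2}}{n}(n-1)=-1$, the pair $(\tilde g_{0},\tilde K_{0}\doteq\tfrac{\tau}{n}\tilde g_{0})$ is a smooth umbilical vacuum solution fitting the hypotheses of the preceding Proposition. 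That Proposition then produces, $C^{k}$-arbitrarily close to $(\tilde g_{0},\tilde K_{0})$, a smooth umbilical vacuum solution $(\tilde{\bar g},\tfrac{\tau}{n}\tilde{\bar g})$ with the same mean and scalar curvatures such that $\tilde{\bar g}$ admits no conformal Killing fields.

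Next, I would undo the rescaling. Define $\bar g\doteq \mu^{-2}\tilde{\bar g}$ and $\bar K\doteq \tfrac{\tau}{n}\bar g$. Then $R_{\bar g}=\mu^{2}R_{\tilde{\bar g}}=-\mu^{2}=2\Lambda-\tfrac{\tau^{2}}{n}(n-1)=R_{g_{0}}$, so $(\bar g,\bar K)$ solves the original vacuum constraints with the same $\Lambda$ and $\tau$. Since $\pounds_{X}(\mu^{2}g)=\mu^{2}\pounds_{X}g$ for constant $\mu$, the conformal Killing property is invariant under constant rescaling, so $\bar g$ still has none; and multiplication by the fixed constant $\mu^{-2}$ transfers the estimate $\|\tilde{\bar g}-\tilde g_{0}\|_{C^{k}}<\varepsilon$ to $\|\bar g-g_{0}\|_{C^{k}}<\varepsilon'$, giving the required closeness in any $C^{k}$-topology.

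Finally, let $(\bar\psi_{0},\bar\beta_{0})$ denote any reference data induced from $(\bar g,\bar K)$ by a smooth positive lapse and smooth shift. As discussed at the beginning of Section 5, the linearisation $L\doteq D_{2}\Phi_{(\bar\psi_{0},\bar\beta_{0})}\colon\mathcal{E}_{2}\to\mathcal{Z}$ is elliptic and formally self-adjoint, and ${\rm Ker}(L)$ coincides with the space of conformal Killing fields of $\bar g$. Since $\bar g$ admits none, $L$ is injective; self-adjointness then gives ${\rm Im}(L)={\rm Ker}(L)^{\perp}=\mathcal{Z}$, so $L$ is an isomorphism. Applying the implicit function theorem to $\Phi$ at $(\bar\psi_{0},\bar\beta_{0})$, as in Theorem \ref{thmTS1} and in \cite{ADRL}, yields the claimed neighbourhood on which the RTSE are well-posed. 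The genuine analytic work is entirely absorbed in the preceding Proposition (density of conformally rigid metrics from \cite{genericmetrics} plus an implicit function argument to restore the prescribed scalar curvature); the present reduction only contributes a constant rescaling on top of that, which is the main technical point to track carefully.
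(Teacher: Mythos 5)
Your proposal is correct and follows essentially the same route as the paper: the paper's own proof consists of the preceding Proposition (normalized case $R_{g_0}=-1$) plus the remark that "a mere rescaling argument" extends it to $R_{g_0}=2\Lambda-\frac{\tau^2}{n}(n-1)<0$, followed by the implicit function argument at a reference solution without conformal Killing fields. You have simply written out the constant-rescaling reduction explicitly (including the adjustment of the cosmological constant and the scale-invariance of conformal Killing fields), which the paper leaves implicit.
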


The above theorem, for instance, shows that any umbilical solution of the vacuum (without cosmological constant) constraint equations, either produces reference solutions of the RTSE such that these equations are well-posed in a neighborhood of this data, or there is a another solution close to it, such that the previous claim holds. 

\section*{Acknowledgements}

We would like to thank professor Justin Corvino for reading a previous version of this paper and making several valuable comments and suggestions.


\end{document}